\documentclass[12pt,english]{article}
\usepackage[T1]{fontenc}
\usepackage[latin9]{inputenc}
\usepackage[letterpaper]{geometry}
\geometry{verbose,tmargin=1in,bmargin=1in,lmargin=1.25in,rmargin=1in}
\usepackage{units}
\usepackage{textcomp}
\usepackage{url}
\usepackage{amsthm}
\usepackage{amsmath}
\usepackage{amssymb}

\makeatletter

\newcommand{\lyxmathsym}[1]{\ifmmode\begingroup\def\b@ld{bold}
  \text{\ifx\math@version\b@ld\bfseries\fi#1}\endgroup\else#1\fi}

\providecommand{\tabularnewline}{\\}

\theoremstyle{plain}
\newtheorem{thm}{Theorem}
  \theoremstyle{remark}
  \newtheorem{rem}[thm]{Remark}
  \theoremstyle{plain}
  \newtheorem{prop}[thm]{Proposition}



\usepackage{url}

\usepackage{babel}

\makeatother

\usepackage{babel}

\begin{document}

\title{Series crimes}

\author{David R. Stoutemyer%
\thanks{dstout at hawaii dot edu%
}}

\maketitle

\begin{abstract}
Puiseux series are power series in which the exponents can be fractional
and/or negative rational numbers. Several computer algebra systems
have one or more built-in or loadable functions for computing truncated
Puiseux series. Some are generalized to allow coefficients containing
functions of the series variable that are dominated by any power of
that variable, such as logarithms and nested logarithms of the series
variable. Some computer algebra systems also have built-in or loadable
functions that compute \textsl{infinite} Puiseux series. Unfortunately,
there are some little-known pitfalls in computing Puiseux series.
The most serious of these is expansions within branch cuts or at branch
points that are incorrect for some directions in the complex plane.
For example with each series implementation accessible to you:

Compare the value of $(z^{2}+z^{3})^{3/2}$ with that of its truncated
series expansion about $z=0$, approximated at $z=-0.01$. Does the
series converge to a value that is the negative of the correct value?

Compare the value of $\ln(z^{2}+z^{3})$ with its truncated series
expansion about $z=0$, approximated at $z=-0.01+0.1i$. Does the
series converge to a value that is incorrect by $2\pi i$?

Compare $\mathrm{arctanh}(-2+\ln(z)z)$ with its truncated series
expansion about $z=0$, approximated at $z=-0.01$. Does the series
converge to a value that is incorrect by about $\pi i$?

At the time of this writing, most implementations that accommodate
such series exhibit such errors. This article describes how to avoid
these errors both for manual derivation of series and when implementing
series packages.
\end{abstract}

\section{Introduction}

This article is a companion to reference \cite{StoutemyerSeriesMisdemeanors}.
That article describes how to overcome design limitation that make
many current Puiseux-series implementations unnecessarily inconvenient,
such as not providing the order that the user requests or not allowing
requests for negative or fractional orders.

In contrast, this article describes how to overcome the more serious
problem of results that are incorrect on branch cuts and at branch
points for most current implementations. This article is relevant
to both truncated and infinite series of almost any type, including
hierarchical, Fourier, Dirichlet and Poisson series. However, for
concreteness the discussion is specific to Puiseux series that are
generalized to permit in the coefficients sub-polynomial functions
of the expansion variable, such as logarithms.

Section \ref{sec:Branch-bugs-for-ln} discusses branch bugs for logarithms.
Section \ref{sec:BugsForFracPows} discusses branch bugs for fractional
powers. Section \ref{sec:Branch-bugs-for-arc} discusses branch bugs
for inverse trigonometric and inverse hyperbolic functions. 

Let $z$ be a complex variable, and let $x$, $y$, $r$, and $\theta$
be real variables. An appropriate substitution can always transform
any expansion point including $\infty$, $-\infty$ and the complex
circle at radius $\infty$ to $z=x+iy=re^{i\theta}=0$. Therefore
without loss of generality the discussion assumes that 0 is the expansion
point.

To test an implementation for branch bugs in a truncated series $U\left(z\right)$
for an expression $u\left(z\right)$:
\begin{enumerate}
\item Evaluate $\left|u\left(0\right)-U\left(0\right)\right|$. If there
is a singularity at $z=0$, then the result might be undefined even
if the series is correct. Otherwise this absolute error should be
0 or very nearly so.\vspace{-0.05in}

\item Do a high-resolution 3D plot of $\left|u\left(z\right)-U\left(z\right)\right|\:\,|\: z\rightarrow x+iy$
centered at $z=0$ with enough terms to span an exponent range of
at least 4. Try zooming in from a moderate initial box radius. An
initial box radius of 0.5 works well for most examples in this article.\vspace{-0.05in}

\item 3D plots can easily miss discontinuities that are ribs, crevasses,
or thin cusps emanating from $z=0$ -- particularly if an edge doesn't
lie along a grid line. Therefore if step 2 doesn't reveal an incorrect
result, then\vspace{-0.05in}

\begin{enumerate}
\item Do a 2-D plot of $\left|u\left(z\right)-U\left(z\right)\right|\:|\: z\rightarrow r_{0}e^{i\theta}$
for $\theta=(-\pi,\pi]$ and various fixed $r_{0}$ that are well
within the estimated radius of convergence.
\item For each critical direction $\theta_{c}$ defined in Section \ref{sub:lnNon0DominantExponent},
plot $\left|u\left(z\right)-U\left(z\right)\right|\:|\: z\rightarrow re^{i\theta_{c}}$
for $r=[-R,R]$ and various fixed $R>0$ that are well within the
estimated radius of convergence.\vspace{-0.05in}

\end{enumerate}
\item If the result of step 1 is undefined because of a singularity, then
exclude $z=0$ or clip the plot magnitude to a positive value $\ll1$.\vspace{-0.05in}

\item Within rounding error and the radius of convergence, the surface and
the curves should converge to 0.0 as the number of terms increases.
If instead any of these plots converge to an obvious jump touching
$z=0$, then the formula is almost certainly incorrect. If there is
a hint of a jump that grows from magnitude 0.0 at $z=0$, then try
instead plotting the relative error $\left|\left(u\left(z\right)-U\left(z\right)\right)/u\left(z\right)\right|$,
excluding $z=0$.
\end{enumerate}
For a real variable $x$, it suffices instead to evaluate $u(0)-U(0)$
and to plot $\left|u\left(x\right)-U\left(x\right)\right|$ and\[
\left|\dfrac{u\left(x\right)-U\left(x\right)}{u\left(x\right)}\right|.\]

\section{Branch bugs for ln\label{sec:Branch-bugs-for-ln}}

\begin{flushright}
{}``\textsl{Spare the branch and spoil the child}.''\\
-- adapted from King Solomon's proverbs.
\par\end{flushright}

Table \ref{tab:SeriesLn} gives several logands together with one
or more correct alternatives for their dominant 0-degree terms of
the logarithm series expanded about complex $z=0$ or real $x=0$. 

\begin{table}[h]
\caption{$\mbox{0-degree term of series}\left(\ln u,\,\mbox{var}\!=\!0,\, o\left(\mathrm{var}^{n}\right)\right)$
with $n\!\geq\!4$, $x,y\in\mathbb{R}$, $z=x\!+\! iy$: \label{tab:SeriesLn}}

\begin{tabular}{|c|r|l|c|}
\hline 
\# & $u$ & Alternative 0-degree terms of $\ln u$ near  $\mbox{variable}=0$ & why\tabularnewline
\hline
\hline 
\negthinspace{}\negthinspace{}1a\negthinspace{}\negthinspace{} & $z^{2}+z^{3}$ & $\ln\!\left(z^{2}\right)+\begin{cases}
2i\pi & \mathrm{if}\:\Im\left(z\right)\!<\!0\wedge\Re\left(z\right)\geq\Im\left(z\right)^{2}\!/2+\cdots+o\left(\Im\left(z\right)^{n}\right)\\
-2i\pi & \mathrm{if}\:\Im\left(z\right)\!\geq\!0\wedge\Re\left(z\right)\geq\Im\left(z\right)^{2}\!/2+\cdots+o\left(\Im\left(z\right)^{n}\right)\\
0 & \mathrm{otherwise}\end{cases}$ & $\begin{array}{c}
(\ref{eq:OmegaIs0OrPlusOrMinus2Pi})\\
(\ref{eq:DefineAngleOfs})\\
\mathrm{\! rem}.\:\ref{rem:ExplictCurveAndSeriesThereof}\!\end{array}$\tabularnewline
\hline 
\negthinspace{}\negthinspace{}1b\negthinspace{}\negthinspace{} &  & $\!2\ln z+2i\pi\!\left\lfloor \dfrac{\pi\!-\!2\arg z}{2\pi}\right\rfloor \!+\!\begin{cases}
2i\pi, & \!\!\!\Im\left(z\right)\!<\!0\wedge\Re\left(z\right)\geq\Im\left(z\right)^{2}\!/2+\cdots\!\!\\
-2i\pi, & \!\!\!\Im\left(z\right)\!\geq\!0\wedge\Re\left(z\right)\geq\Im\left(z\right)^{2}\!/2+\cdots\!\!\\
0, & \mathrm{\!\!\! otherwise}\end{cases}$ & $\begin{array}{c}
(\ref{eq:PsiEqOmegaPlusFloorPlusConditional})\\
(\ref{eq:DefineAngleOfs})\\
\mathrm{\! rem}.\:\ref{rem:ExplictCurveAndSeriesThereof}\!\end{array}$\tabularnewline
\hline 
\negthinspace{}\negthinspace{}2a\negthinspace{}\negthinspace{} & $z^{2}+z^{3}e^{z}$ & $\ln\!\left(z^{2}\right)+\begin{cases}
2i\pi & \mathrm{if}\:\arg\!\left(1\!+\! z\!+\! z^{2}\!+\cdots+o\left(z^{n}\right)\right)+\arg\!\left(z^{2}\right)\leq-\pi\\
-2i\pi & \mathrm{if}\:\arg\!\left(1\!+\! z\!+\! z^{2}\!+\cdots+o\left(z^{n}\right)\right)+\arg\!\left(z^{2}\right)>\pi\\
0 & \mathrm{otherwise}\end{cases}$ & (\ref{eq:OmegaIs0OrPlusOrMinus2Pi})\tabularnewline
\hline 
\negthinspace{}\negthinspace{}2b\negthinspace{}\negthinspace{} &  & $\!2\ln z\!+\!2\pi i\!\left\lfloor \dfrac{\pi\!-\!2\arg z}{2\pi}\!\right\rfloor \!+\!\begin{cases}
2i\pi, & \!\mathrm{\!\!\!}\arg\!\left(1\!+\cdots+o\!\left(z^{n}\right)\right)+\arg\!\left(z^{2}\right)\leq-\pi\!\!\\
-2i\pi, & \mathrm{\!\!\!\!}\arg\!\left(1\!+\cdots+o\!\left(z^{n}\right)\right)+\arg\!\left(z^{2}\right)>\pi\\
0, & \mathrm{\!\!\! otherwise}\end{cases}$ & (\ref{eq:PsiEqFloorPlusConditional})\tabularnewline
\hline 
\negthinspace{}\negthinspace{}2c\negthinspace{}\negthinspace{} &  & $\ln\!\left(z^{2}\right)+\left(\arg\left(z^{2}\!+\cdots+o\left(z^{n}\right)\right)-\arg\!\left(1\!+\cdots+o\left(z^{n}\right)\right)-\arg\!\left(z^{2}\right)\right)i$ & (\ref{eq:OmegaUnconditional})\tabularnewline
\hline 
\negthinspace{}\negthinspace{}2d\negthinspace{}\negthinspace{} &  & $2\ln\!\left(z\right)+\left(\arg\left(z^{2}\!+\cdots+o\left(z^{n}\right)\right)-\arg\!\left(1\!+\cdots+o\left(z^{n}\right)\right)-2\arg\!\left(z\right)\right)i$ & (\ref{eq:PsiEq4Args})\tabularnewline
\hline 
3 & \negthinspace{}-$z^{\nicefrac{-7}{6}}\!-\! z^{\nicefrac{7}{3}}\!$ & $\ln\left(-z^{-7/6}\right)\:$ or $ $$\:-7\ln\left(z\right)/6+\begin{cases}
i\pi & \mathrm{if}\:\Im\left(z\right)\geq0\\
-\pi i & \mathrm{otherwise}\end{cases}$ & $\begin{array}{c}
\!\!\mathrm{prop}\:\ref{pro:omegaForExponentDivisibility}\!\!\\
\mathrm{or}\:(\ref{eq:PsiEqFloorPlusConditional})\end{array}$\tabularnewline
\hline 
4 & \negthinspace{}-$1\!-\! z^{2}\!-\! z^{3}\!$ & $\begin{cases}
i\pi & \mathrm{if}\:\left(\Im\left(z\right)\geq0\,\wedge\,\Re\left(z\right)\leq\Im\left(z\right)^{2}\!/2+\cdots+o\left(\Im\left(z\right)^{n}\right)\right)\vee\\
 & \quad\left(\Im\left(z\right)>0\,\wedge\,\Re\left(z\right)\geq\Im\left(z\right)^{2}\!/2+\cdots+o\left(\Im\left(z\right)^{n}\right)\right)\\
-i\pi & \mathrm{otherwise}\end{cases}$ & $\begin{array}{c}
(\ref{eq:OmegaPiOfImGOnC})\\
\!\mathrm{rem.}\:\ref{rem:ExplictCurveAndSeriesThereof}\end{array}$\tabularnewline
\hline 
5 & \negthinspace{}-$1\!-\! z^{2}e^{z}\!$ & $\begin{cases}
i\pi & \mathrm{if}\:\Im\left(z^{2}+2z^{3}+z^{4}+\cdots+o\left(z^{n}\right)\right)\leq0\\
-i\pi & \mathrm{otherwise}\end{cases}$ & (\ref{eq:OmegaPiOfImGOnC})\tabularnewline
\hline 
6 & $-1\!+\! iz^{1/4}$ & $i\pi$ & (\ref{eq:tauForomegaEq0})\tabularnewline
\hline 
7 & \negthinspace{}-$1\!-\! iz^{\frac{1}{4}}+z\!$ & $\begin{cases}
i\pi & \mathrm{if}\: z=0\\
-i\pi & \mathrm{otherwise}\end{cases}$ & (\ref{eq:tauForomegaEqMinus2PiExceptzEq0})\tabularnewline
\hline 
\negthinspace{}\negthinspace{}8\negthinspace{}\negthinspace{} & \negthinspace{}-$1\!+\! iz^{\frac{1}{2}}\!+\! z\!$ & $\begin{cases}
i\pi & \mathrm{if}\: z\leq0\\
-i\pi & \mathrm{otherwise}\end{cases}$ & rem.\negthinspace{} \ref{rem:BorderlineCase}\tabularnewline
\hline 
\negthinspace{}\negthinspace{}9\negthinspace{}\negthinspace{} & $c+z^{2}$ & $\begin{cases}
\ln\left(z^{2}\right) & \mathrm{if}\: c=0\\
\ln\left(c\right)-2i\pi & \mathrm{if}\:\arg\left(c\right)=\pi\wedge\Im\left(z^{2}\right)<0\\
\ln\left(c\right) & \mathrm{otherwise}\end{cases}$ & \negthinspace{}rem.\negthinspace{} \ref{rem:CasesForLiteralDominantCoefs}\negthinspace{}\tabularnewline
\hline 
\negthinspace{}\negthinspace{}10\negthinspace{}\negthinspace{} & $-x^{-2}+e^{x}$ & $\ln\left(x^{-2}\right)$ & \negthinspace{}\negthinspace{}prop\negthinspace{} \ref{pro:omegaRealForRealCfAndIntegerExpon}\negthinspace{}\negthinspace{}\tabularnewline
\hline 
\negthinspace{}\negthinspace{}11\negthinspace{}\negthinspace{} & $x^{2}+x^{3}e^{x}$ & $\ln\left(x^{2}\right),\:$ or $\:2\ln\left(\left|x\right|\right),\:$
or $\:2\ln\left(x\right)+\begin{cases}
-2i\pi & \mathrm{if}\: x<0\\
0 & \mathrm{otherwise}\end{cases}$ & $\begin{array}{c}
\!\!\mathrm{prop}\:\ref{pro:omegaRealForRealCfAndIntegerExpon}\!\!\\
\mathrm{or}\:(\ref{eq:PsiEqFloorPlusConditional})\end{array}$\tabularnewline
\hline 
\negthinspace{}\negthinspace{}12a\negthinspace{}\negthinspace{} & $x^{4/3}+x^{2}$ & $\ln\left(x^{4/3}\right)$ & \negthinspace{}rem.\negthinspace{} \ref{rem:omegaRealInGeneral}\negthinspace{}\tabularnewline
\hline 
\negthinspace{}\negthinspace{}12b\negthinspace{}\negthinspace{} & $\begin{array}{c}
\mathrm{real}\\
\mathrm{branch}\end{array}$ & $\ln\left(x^{4/3}\right),\:$ or $\:4\ln\left(|x|\right)/3,\:$ or
$\:4\ln\left(x\right)/3-\begin{cases}
4i\pi/3 & \mathrm{if}\: x<0\\
0 & \mathrm{otherwise}\end{cases}$ & $\begin{array}{c}
\mathrm{sec}.\quad\\
\ref{sub:RealBranchLnFracPow}\end{array}$\tabularnewline
\hline
\end{tabular}
\end{table}

If an implementation gives a degree-0 term that isn't equivalent to
these correct alternatives at $x=0$ and as $x\rightarrow0$ from
both directions or at $z=0$ and as $z\rightarrow0$ from all directions,
then the computer algebra result is incorrect. For example, most implementations
currently give the following generalized infinite generalized Maclaurin
series or a truncated version of it for $\ln\left(z^{2}+z^{3}\right)$:\begin{equation}
2\ln z+\sum_{k=0}^{\infty}\dfrac{(-1)^{k}z^{k+1}}{k+1}.\label{eq:AWrongSeriesForLn}\end{equation}
Within the radius of convergence 1, this series converges to values
that are too large by $2\pi i$ wherever \[
y\geq0\:\wedge\: x<\dfrac{\sqrt{1+3y^{2}}-1}{3},\]
or too small by $2\pi i$ wherever\[
y<0\:\wedge\: x\leq\dfrac{\sqrt{1+3y^{2}}-1}{3},\]
with $z=x+iy$. For example, at $z=0.1i$, $\ln\left(z^{2}+z^{3}\right)\simeq4.6002\mathbf{-3.04192}\boldsymbol{i}$,
whereas series (\ref{eq:AWrongSeriesForLn}) truncated to $o\left(z^{4}\right)$
gives approximately $4.6002\mathbf{+3.24126}\boldsymbol{i}$. Series
(\ref{eq:AWrongSeriesForLn}) is also incorrect at $z=-0.1$.

If an implementation gives a degree-0 term that is equivalent to one
of the results listed in Table \ref{tab:SeriesLn} but more complicated,
then there is room for improvement of the simplification in ways described
below.

One way to avoid returning an incorrect result is to refuse attempting
series expansions on branch cuts and on the branch points at their
ends, either returning an error indication or an unsimplified result
such as {}``$\mathrm{series}\left(\cdots\right)$''. However, this
precludes useful results for many examples of frequent interest, such
as
\begin{itemize}
\item fractional powers and logarithms of many expressions whose dominant
exponent is non-zero or whose dominant coefficient isn't positive,
\item $\arcsin\left(u\left(z\right)\right)$, $\arccos\left(u\left(z\right)\right)$,
$\mathrm{arccosh}\,\left(u\left(z\right)\right)$, and $\mathrm{arctanh}\,\left(u\left(z\right)\right)$
at $u\left(z\right)=1$ or $u\left(z\right)=-1$.
\end{itemize}
Another way to avoid returning an incorrect result is to force the
user to specify a numeric direction $\theta_{0}$ for the series expansion
variable $z=re^{i\theta}$, compute\[
\mathrm{series}\left(f\left(re^{i\theta_{0}}\right),\, r=0^{+},\, o\left(r^{n}\right)\right),\]
substitute $r\rightarrow ze^{-i\theta_{0}}$ into the result, then
preferably attach to the result the constraint {}``$|\: z=re^{i\theta_{0}}\,\wedge\, r>0$''.
The result is then guaranteed only for direction $\theta_{0}$. This
is a reasonable approach when the only purpose of the series is to
determine a uni-directional limit of an expression via that of its
dominant term, and for a bi-directional limit we can invoke $\mathrm{series}\left(\ldots\right)$
twice with two different values of $\theta_{0}$. However, this approach
isn't appropriate for omni-directional limits.

Moreover, with this approach it is important for $\theta_{0}$ to
have \textsl{no} default, such as the most likely choice 0. Otherwise
many users won't realize that their formula might not be correct for
non-positive or non-real $z$.

In contrast, this article presents formulas that are correct for all
$\theta_{0}$ that aren't precluded by any constraints provided by
the user, such as {}``$\ldots|\: z>0$'' or {}``$\ldots|\:-\pi/3<\arg z\leq2\pi/3$''.
Moreover, even for the approach of requiring a numeric $\theta_{0}$,
the formulas presented in the remainder of this article are helpful
for determining the correct behavior for that $\theta_{0}$.

\subsection{Incorrect extraction of the dominant term }

\begin{flushright}
{}``\textsl{The devil is in the details}.''\\
-- after Gustave Flaubert.
\par\end{flushright}

Most computer-algebra systems use a particular branch when a multiply-branched
function is simplified for numeric arguments. This branch is most
often the principal branch. However, some computer algebra systems
offer the option or the default of using the real branch for fractional
powers having odd reduced denominators together with real radicands.
Either way, for consistency the same branch should be used for expressions
and their series.

One source of incorrect ln series is omitting the $\Upsilon i$ term
in the following universal principal-branch formula for the distribution
of logarithms over products:\begin{eqnarray}
\ln\left(uv\right) & \equiv & \ln(u)+\ln(v)+\Upsilon i,\label{eq:LnOfProd}\end{eqnarray}
where \begin{eqnarray}
\Upsilon & = & \arg(uv)-\arg(u)-\arg(v)\label{eq:UpsilonUnconditional}\\
 & = & \begin{cases}
2\pi & \mathrm{if}\;\arg(u)+\arg(v)\leq-\pi,\\
-2\pi & \mathrm{if\;}\arg(u)+\arg(v)>\pi,\\
0 & \mathrm{otherwise}.\end{cases}\label{eq:UpsilonIs0OrPlusOrMinus2Pi}\end{eqnarray}
This can be proved from\begin{eqnarray}
\ln\left(|uv|\right) & \equiv & \ln(|u|)+\ln(|v|),\label{eq:LnAbsProd}\\
\ln\left(|w|\right) & \equiv & \ln\left(w\right)-\arg\left(w\right)i,\label{eq:LnAbs}\\
\arg(uv) & \equiv & \mathrm{mods}\left(\arg(u)+\arg(v),\,2\pi\right).\label{eq:ArgProd}\end{eqnarray}
Here $\mathrm{mod\mathbf{s}}\left(u,v\right)$ is the residue of $u$
of mod $v$ in the near-\textbf{s}ymmetric interval $(-v/2,v/2]$
for $v$ positive.%
\footnote{The name mods is inspired by that built-in Maple function.

Many of the formulas involving $\arg\left(\ldots\right)$ in this
article can be expressed more concisely using the unwinding number
described in\cite{CorlessAndJeffrey} and later redefined more conveniently
with the opposite sign in \cite{AccordingToAAndS}. But alas, unwinding
numbers aren't yet built-into the mathematics curriculum and most
computer-algebra systems.%
}
\begin{rem}
\label{rem:Arg0Eq0}These formulas require the useful but non-universal
definition\begin{eqnarray}
\arg\left(0\right) & := & 0,\end{eqnarray}
as is done in \textsl{Mathematica}$^{\lyxmathsym{\textregistered}}$.
If a built-in $\arg\left(\ldots\right)$ function does anything else,
then an implementer should prepend here and throughout this article
appropriate cases for each possible combination of an argument of
$\arg\left(\ldots\right)$ being 0. For example,\begin{eqnarray*}
\Upsilon & = & \begin{cases}
0 & \mathrm{if}\; u=0\,\vee\, v=0,\\
\arg(uv)-\arg(u)-\arg(v) & \mathrm{otherwise}.\end{cases}\\
 & = & \begin{cases}
0 & \mathrm{if}\; u=0\,\vee\, v=0\,\vee\,-\pi<\arg(u)+\arg(v)\leq\pi,\\
2\pi & \mathrm{if}\;\arg(u)+\arg(v)\leq-\pi,\\
-2\pi & \mathrm{otherwise}.\end{cases}\end{eqnarray*}
Here and throughout this article, Boolean expressions and braced case
expressions are assumed to be done using short-circuit evaluation
from left- to-right within top-to-bottom order to avoid evaluating
ill-defined sub-expressions and to avoid the clutter of making the
tests mutually exclusive.
\end{rem}
Alternative (\ref{eq:UpsilonUnconditional}) is more compact than
alternative (\ref{eq:UpsilonIs0OrPlusOrMinus2Pi}) and reveals that
jumps in $\Upsilon$ can occur only where one of $\arg(u)$, $\arg(v)$
or $\arg\left(uv\right)$ is $\pi$. However, alternative (\ref{eq:UpsilonIs0OrPlusOrMinus2Pi})
is more candid because it makes the piecewise constancy manifest rather
than cryptic. Moreover, approximate values are often substituted into
expressions for purposes such as plotting, and the conditional alternative
(\ref{eq:UpsilonIs0OrPlusOrMinus2Pi}) avoids having the magnitude
of the imaginary part of a result be several machine $\varepsilon$
when it should be 0: Unlike the unconditional alternative, the conditional
alternative never subtracts two approximate angles from approximately
$\pi$, giving approximately 0.

For $\mathrm{series}\left(\ln\left(\ldots\right),\, z=0,\, o\left(z^{n}\right)\right)$
with negative $n$, the result is $0+o\left(z^{n}\right)$ if the
logand doesn't contain an essential singularity. In contrast, for
a non-negative requested $n$, the usual algorithm for computing the
logarithm of a series entails converting the dominant term of the
logand series to 1 by factoring out the dominant term then distributing
the logarithm over the resulting product:\begin{eqnarray}
\ln\left(c\left(z\right)z^{\alpha}+g(z)\right) & \rightarrow & \ln\left(c\left(z\right)z^{\alpha}\left(1+\frac{g(z)}{c\left(z\right)z^{\alpha}}\right)\right)\nonumber \\
 & \rightarrow & \left(\Omega i+\ln\left(c(z)z^{\alpha}\right)\right)+\ln\left(1+\frac{g(z)}{c\left(z\right)z^{\alpha}}\right).\label{eq:LnOfASeries}\end{eqnarray}
Here $c\left(z\right)z^{\alpha}$ is the dominant term and $g(z)$
is the sum of all the other terms, with alternatives (\ref{eq:UpsilonUnconditional})
and (\ref{eq:UpsilonIs0OrPlusOrMinus2Pi}) giving\begin{eqnarray}
\Omega & = & \arg\left(c(z)z^{\alpha}+g(z)\right)-\arg\left(1+\frac{g(z)}{c\left(z\right)z^{\alpha}}\right)-\arg\left(c\left(z\right)z^{\alpha}\right)\label{eq:OmegaUnconditional}\\
 & = & \begin{cases}
2\pi & \mathrm{if}\:\arg\left(1+\frac{g(z)}{c\left(z\right)z^{\alpha}}\right)+\arg\left(c\left(z\right)z^{\alpha}\right)\leq-\pi,\\
-2\pi & \mathrm{if}\:\arg\left(1+\frac{g(z)}{c\left(z\right)z^{\alpha}}\right)+\arg\left(c\left(z\right)z^{\alpha}\right)>\pi,\\
0 & \mathrm{otherwise}.\end{cases}\label{eq:OmegaIs0OrPlusOrMinus2Pi}\end{eqnarray}

Always $\Omega=0$ at $z=0$.

It is important to simplify $\Omega$ as much as is practical for
each particular logand series. Not only is the result more intelligible
-- it is usually also more accurate for approximate computation: There
can be catastrophic cancellation between terms of $g(z)$. Therefore
without good algebraic simplification, $\Omega$ can be dramatically
incorrect along and near branch cuts when evaluated with approximate
arithmetic.
\begin{rem}
\label{rem:OmegaEq0IfGEq0}For example, $\Omega\equiv0$ if $g(z)\equiv0$,
because then $\arg\left(1+g(z)/\left(c(z)z^{\alpha}\right)\right)\equiv0$
and $\arg\left(c\left(z\right)z^{\alpha}\right)$ must be in the interval
$(-\pi,\pi]$.
\end{rem}
Even if we can't determine a simpler formula that is equivalent to
formula (\ref{eq:OmegaIs0OrPlusOrMinus2Pi}) throughout the entire
complex plane or the entire real line for real $z$, $\Omega$ might
be equivalent to a simpler expression $\omega$ throughout the radius
of convergence or the useful portion thereof. It is especially important
to exploit this for examples such as\[
\textrm{series}\left(z^{-1}+e^{z}\ln\left(-2-z\right),\, z\!=\!0,\, o\left(z^{5}\right)\right)\]
where $\Omega$ or $\omega$ infects all but one result term, giving
a bulky result that is difficult to comprehend.

Let $R>0$ be the classic radius of convergence computed disregarding
any closer branch cuts, or let $R$ be the {}``radius of computational
utility'' for divergent series. Let $\underline{R}>0$ be the largest
radius from $z=0$ within which $\Omega$ and a simpler $\omega$
give identical values. Radius $\underline{R}$ can be an arbitrarily
small portion of $R$ because a branch cut can pass arbitrarily close
to $z=0$. However, it is almost always justifiable to use $\omega$
in place of $\Omega$ because:\vspace{-0.1in}

\begin{itemize}
\item If our purpose is to determine the local behavior of the series at
$z=0$, such as for computing a limit by computing the limit of the
dominant term, then any $\underline{R}>0$ is sufficient justification
for using $\omega$.\vspace{-0.1in}

\item It seems pointless to use $\Omega$ rather than a simpler $\omega$
for any purpose if $\underline{R}>R$ or if $\underline{R}$ is greater
than the percentage of $R$ beyond which convergence is impractically
slow or subject to unacceptable catastrophic cancellation. \vspace{-0.1in}

\item If we don't expect a generalized Puiseux series to capture infinite
magnitudes associated with singularities not at $z=0$, then why should
we expect such series to capture the less severe finite-magnitude
jumps associated with branch cuts that don't touch $z=0$?\vspace{-0.1in}

\item We can take the view that the \textsl{generalized} radius of convergence
is the distance to the nearest singularity \textsl{or jump} in $\Omega$
that we can't account for with $\omega$, and there should be no expectation
that a series is truthful beyond its generalized radius of convergence.
\end{itemize}
Here is one such opportunity for computing an $\omega$ that is significantly
simpler than $\Omega$:
\begin{prop}
\label{pro:omegaRealForRealCfAndIntegerExpon}Let $\omega_{\mathrm{real}}$
denote $\omega$ for the special case of real $z$. If all of the
terms in the truncated logand series have real coefficients and integer
exponents, then $\omega_{\mathrm{real}}\equiv0$.\end{prop}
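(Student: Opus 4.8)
The plan is to show directly from the unconditional formula (\ref{eq:OmegaUnconditional}) that $\Omega$ vanishes identically on a real neighborhood of $0$, so that the simpler choice $\omega\equiv 0$ is valid there. First I would unpack the hypothesis: if every term of the truncated logand series has a real coefficient and an integer exponent, then in the notation of (\ref{eq:LnOfASeries}) the dominant coefficient $c(z)$ is a nonzero real \emph{constant} $c$ (no logarithmic or other sub-polynomial factors can appear), the dominant exponent $\alpha$ is an integer, and $g(z)=\sum_{k>\alpha}c_k z^k$ is a finite sum with each $c_k\in\mathbb{R}$ and each $k\in\mathbb{Z}$. Now specialize to real $z=x$.

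The key step is to control the three arguments appearing in (\ref{eq:OmegaUnconditional}). For $x\neq 0$ the quotient $g(x)/(c x^{\alpha})$ is a finite real linear combination of integer powers $x^{k-\alpha}$ with $k-\alpha\geq 1$; hence it is real and tends to $0$ as $x\to 0$. Because this quotient is a finite sum, there is a single radius $\rho>0$ such that $1+g(x)/(c x^{\alpha})>0$ for all $x$ with $0<|x|<\rho$, so $\arg\!\bigl(1+g(x)/(c x^{\alpha})\bigr)=0$ there. Moreover $c x^{\alpha}$ is a nonzero real number for $x\neq 0$, so $\arg(c x^{\alpha})\in\{0,\pi\}$; and since $c x^{\alpha}+g(x)=c x^{\alpha}\bigl(1+g(x)/(c x^{\alpha})\bigr)$ is a \emph{positive} real multiple of $c x^{\alpha}$, we get $\arg\!\bigl(c x^{\alpha}+g(x)\bigr)=\arg(c x^{\alpha})$.

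Substituting these three evaluations into (\ref{eq:OmegaUnconditional}) gives, for $0<|x|<\rho$,
\[
\Omega=\arg(c x^{\alpha})-0-\arg(c x^{\alpha})=0,
\]
and $\Omega=0$ at $x=0$ as already observed. Hence $\omega\equiv 0$ agrees with $\Omega$ on the whole segment $[-\rho,\rho]$, so $\underline{R}\geq\rho>0$ and therefore $\omega_{\mathrm{real}}\equiv 0$, as claimed. The only point requiring care is the existence of one positive $\rho$ that works uniformly across the real segment near $0$; this is immediate because the truncated $g$ has only finitely many terms (if one instead allowed the full infinite series, one would simply take $\rho$ no larger than its radius of convergence). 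No branch ambiguities intervene since integer exponents keep every power of a real $x$ real and single-valued, and the convention $\arg(0):=0$ of Remark~\ref{rem:Arg0Eq0} is never actually invoked because $c x^{\alpha}\neq 0$ for $x\neq 0$.
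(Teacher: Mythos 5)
Your proposal is correct and follows essentially the same route as the paper: both arguments rest on the observation that $g(z)/\left(c(z)z^{\alpha}\right)$ is real for real $z$, so that $\arg\left(1+g(z)/\left(c(z)z^{\alpha}\right)\right)=0$ near the expansion point while $\arg\left(c(z)z^{\alpha}\right)$ stays in $(-\pi,\pi]$. The only cosmetic difference is that you cancel the first and third terms of the unconditional form (\ref{eq:OmegaUnconditional}) directly, whereas the paper checks that the sum of arguments lies in $(-\pi,\pi]$ so that the conditional form (\ref{eq:OmegaIs0OrPlusOrMinus2Pi}) selects the zero case; the paper also does not need your extra reduction of $c(z)$ to a constant, since reality of the quotient suffices.
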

\begin{proof}
If the truncated logand series has all integer powers of real $z$
and all real coefficients, then $1+g\left(z\right)/\left(c\left(z\right)z^{\alpha}\right)$
is real for all real $z$, and $\arg\left(1+g\left(z\right)/\left(c\left(z\right)z^{\alpha}\right)\right)=0$
for all real $z$ such that $g\left(z\right)/\left(c\left(z\right)z^{\alpha}\right)\geq-1$.
There is a singularity wherever $g(z)/\left(c(z)z^{\alpha}\right)=-1$,
providing an upper bound on the radius of convergence. Also, $-\pi<\arg\left(c(z)z^{\alpha}\right)\leq\pi$.
Therefore throughout the radius of convergence\[
-\pi<\arg\left(1+\frac{g(z)}{c\left(z\right)z^{\alpha}}\right)+\arg\left(c\left(z\right)z^{\alpha}\right)\leq\pi\]
in equation (\ref{eq:OmegaIs0OrPlusOrMinus2Pi}), making $\omega_{\mathrm{real}}\equiv0$.\end{proof}
\begin{rem}
\label{rem:omegaRealForOddDenominators}If we are in a mode that consistently
uses the real branch for fractional powers having odd denominators,
such as the TI-Nspire%
\footnote{The computer algebra used in Texas Instruments products has no name
separate from the variously named calculators, Widows and Macintosh
products that contain it. The most recent such product is named TI-Nspire.%
} real mode, then more generally $\omega_{\mathrm{real}}\equiv0$ if
all of the coefficients are real and none of the reduced exponents
have even denominators.
\end{rem}

\subsubsection{The non-zero dominant exponent case\label{sub:lnNon0DominantExponent}}

\begin{flushright}
{}``\textsl{Beware of geeks bearing formulas}.''\\
-- Warren Buffett.
\par\end{flushright}

Here is a useful easy simplification test for real or complex $z$
when the dominant exponent $\alpha\neq0$:
\begin{prop}
\label{pro:omegaForExponentDivisibility}If all of the coefficients
in a truncated logand series $U$ are real and all of the exponents
of $z$ in $U$ are integer multiples of a non-zero dominant exponent,
then $\omega\equiv0$.\end{prop}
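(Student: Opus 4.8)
The plan is to reduce the claim to showing $\arg(1+g(z)/(c(z)z^{\alpha}))+\arg(c(z)z^{\alpha})\in(-\pi,\pi]$ for every $z$ in some punctured disk $0<|z|<\underline{R}$: by alternative (\ref{eq:OmegaIs0OrPlusOrMinus2Pi}) this forces $\Omega=0$ there, and since $\Omega=0$ at $z=0$ too, the constant $0$ is then an admissible simpler $\omega$, so $\omega\equiv0$ in the sense of agreement on $\underline{R}$ (the ``useful portion'' of the radius of convergence discussed above, which may be only a small fraction of it). First I would record the structural observation: ``all coefficients real'' makes the dominant coefficient $c(z)$ a real constant $c\neq0$, and since every exponent of $z$ occurring in $U$ is an integer multiple $k\alpha$ of $\alpha$ while integer powers of $z^{\alpha}$ carry no branch ambiguity ($(z^{\alpha})^{k}=z^{k\alpha}$ for $k\in\mathbb{Z}$), the quotient $1+g(z)/(cz^{\alpha})=U(z)/(cz^{\alpha})$ is \emph{exactly} a polynomial $Q(t)$ with real coefficients and $Q(0)=1$, where $t:=z^{\alpha}$ if $\alpha>0$ and $t:=z^{-\alpha}=(z^{\alpha})^{-1}$ if $\alpha<0$. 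Because $\alpha$ is the dominant exponent, every power of $z^{\alpha}$ surviving in $g(z)/(cz^{\alpha})$ has the same sign as $\alpha$, so $t$ is the right single variable, $t\to0$ as $z\to0$, and hence $Q(t)\to1$ uniformly in $\arg z$.

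Next comes the analytic estimate, split by the position of $cz^{\alpha}$ relative to the branch cut $(-\infty,0)$ of the principal logarithm. When $\arg(cz^{\alpha})$ stays in $[-\pi+\delta_{0},\,\pi-\delta_{0}]$ for a fixed $\delta_{0}>0$, uniform convergence $Q(t)\to1$ gives $|\arg Q(t)|<\delta_{0}$ once $|z|$ is small, so the two arguments sum into $(-\pi,\pi)$ and there is nothing to check. The delicate region is $cz^{\alpha}$ on or near the cut, say $\arg(cz^{\alpha})=\pi-\psi$ with $\psi\geq0$ small (and the mirror image near $-\pi$). On the cut $t$ is real with sign opposite to that of $c$, so $Q(t)$ is real and, for $|z|$ small, positive and close to $1$; expanding $Q$ about $t_{0}:=\Re t$ gives $\arg Q(t)=Q'(t_{0})\,\Im(t)/Q(t_{0})+O(|\Im t|^{2})$ with $Q'(t_{0})/Q(t_{0})\to Q'(0)$ bounded and $|\Im t|=|t|\sin\psi$, whence $|\arg Q(t)|=O(|t|\sin\psi)<\psi$ once $|z|$ is small enough (uniformly). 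Thus $(\pi-\psi)+\arg Q(t)\leq\pi$, and symmetrically $(-\pi+\psi)+\arg Q(t)>-\pi$; the two ranges of $\arg(cz^{\alpha})$ together cover the whole punctured disk $0<|z|<\underline{R}$, completing the reduction.

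I expect the branch-cut estimate to be the only real obstacle. What has to be ruled out is an argument-correction $\arg Q(t)$ of the wrong sign whose magnitude exceeds the tiny angular gap $\psi$ separating $cz^{\alpha}$ from $(-\infty,0)$; the saving feature is that this magnitude is itself $O(\psi)$ --- through $\Im t=|t|\sin\psi$ together with $t\to0$ --- so the bound is robust however close $cz^{\alpha}$ comes to the cut, but only for $|z|$ small, which is exactly why one gets a neighborhood of $z=0$ rather than the full classical radius of convergence. (For contrast, $U=z^{2}+z^{4}$ meets the hypotheses, with $\alpha=2$, real coefficients, and $4=2\cdot2$, yet $\Omega=-2\pi$ at points such as $z=0.01+0.8i$, well inside the classical radius $1$; there the correct degree-$0$ term is $\ln(z^{2})-2\pi i$, not the plain $\ln(z^{2})$ this proposition supplies, and a globally valid formula instead needs a conditional or floor-type correction of the kind shown in the early rows of Table~\ref{tab:SeriesLn}.) Everything else is bookkeeping: the integer-multiple hypothesis is precisely what converts $1+g(z)/(cz^{\alpha})$ into a genuine real-coefficient polynomial in a single variable tending to $0$, and then the observation (in the spirit of Remark~\ref{rem:OmegaEq0IfGEq0}) that this polynomial is positive real on the cut near $z=0$ does the rest.
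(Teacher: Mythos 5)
Your proof is correct, but it takes a genuinely different route from the paper's. The paper works with the unconditional form (\ref{eq:OmegaUnconditional}): since $\Omega$ is piecewise constant it can change only where one of its three $\arg\left(\ldots\right)$ terms jumps; the middle term cannot jump near $z=0$ because the dominant term dominates $g$; and the integer-multiple hypothesis makes the whole logand negative real wherever $c(z)z^{\alpha}$ is, so the jump loci of the two outer terms coincide and, carrying opposite signs, cancel. You instead verify the ``otherwise'' branch of the conditional form (\ref{eq:OmegaIs0OrPlusOrMinus2Pi}) directly, rewriting $1+g(z)/(cz^{\alpha})$ as a real-coefficient polynomial $Q(t)$ in $t=z^{\pm\alpha}$ and proving the quantitative bound $\left|\arg Q(t)\right|=O(\left|t\right|)\,\psi<\psi$, where $\psi$ is the angular gap between $cz^{\alpha}$ and the cut. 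Your route is longer but buys something real: the example $U=z^{2}+z^{4}$, where $\Omega=-2\pi$ at $z=0.01+0.8i$ because $\arg\left(z^{2}+z^{4}\right)=\pi$ also along the extra branch $\Re\left(z^{2}\right)=-1/2$, shows that the jump locus of $\arg\left(cz^{\alpha}+g\right)$ can strictly contain that of $\arg\left(cz^{\alpha}\right)$ well inside the region $\left|g/(cz^{\alpha})\right|<1$ that the paper's proof designates as ``near $z=0$''. The paper establishes only one containment of the two jump sets and tacitly assumes the other, so its conclusion that the jumps cancel ``within the radius of convergence'' overreaches; only the local statement you prove, on some punctured disk $0<\left|z\right|<\underline{R}$, is correct, and that is exactly what the section's definition of $\omega$ requires. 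The one simplification you make relative to the paper is taking the dominant coefficient to be a real \emph{constant} rather than a real $z$-dependent $k(z)$; your Taylor-in-$\Im(t)$ estimate survives provided the coefficients of $Q$ remain real and bounded near $z=0$, which is the case for the coefficients that actually occur here.
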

\begin{proof}
Let the truncated logand series be $U=c\left(z\right)z^{\alpha}+g(z)$
with dominant term $c\left(z\right)z^{\alpha}$ having non-zero $\alpha$.
Also, let {}``\textbf{near} $z=0$'' denote all\begin{eqnarray}
z & = & re^{i\theta}\:\:|\:\: r\!>\!0\,\wedge\,\left|\dfrac{g(z)}{c(z)z^{\alpha}}\right|\!<\!1,\end{eqnarray}
which bounds the radius of convergence. From equation (\ref{eq:OmegaUnconditional}),
it is apparent that $\Omega$ can jump only where one of its three
$\arg\left(\ldots\right)$ terms jumps. Term $\arg\left(1+g(z)/\left(c(z)z^{\alpha}\right)\right)$
can't jump near $z=0$ because $c(z)z^{\alpha}$ dominates $g(z)$.
Every term in $c(z)z^{\alpha}+g(z)$ can be expressed as $k(z)\left(z^{\alpha}\right)^{m}$
with real $k(z)$ and integer $m$, because the transformation $z^{\gamma}\rightarrow\left(z^{\alpha}\right)^{m}$
with $m=\gamma/\alpha$ is always valid for integer $m$, as discussed
in Section \ref{sub:Principal-branch-series-of-frac-pow}. Thus $c\left(z\right)z^{\alpha}+g(z)$
is real where\begin{eqnarray}
\arg\left(c\left(z\right)z^{\alpha}\right) & = & \pi,\label{eq:ArgczToAlphaEqPi}\end{eqnarray}
which are the only places that $\arg\left(c\left(z\right)z^{\alpha}\right)$
jumps. Since $c\left(z\right)z^{\alpha}$ is negative there and dominates
$g(z)$, adding the relatively small-magnitude real $g(z)$ to negative
$c\left(z\right)z^{\alpha}$ leaves\begin{eqnarray}
\arg\left(c\left(z\right)z^{\alpha}+g(z)\right) & = & \pi\label{eq:ArgSeriesEqPi}\end{eqnarray}
near $z=0$. In equation (\ref{eq:OmegaUnconditional}), $\arg\left(c\left(z\right)z^{\alpha}+g(z)\right)$
and $\arg\left(c\left(z\right)z^{\alpha}\right)$ have opposite signs.
Therefore these jumps cancel within the radius of convergence.
\end{proof}
Propositions  \ref{pro:omegaRealForRealCfAndIntegerExpon} and \ref{pro:omegaForExponentDivisibility}
don't identify all opportunities for dramatically simplifying $\omega$.
Consider equation (\ref{eq:OmegaUnconditional}) in the neighborhood
of $z=0$. Let\begin{eqnarray}
\eta\left(\theta\right) & = & \lim_{r\rightarrow0^{+}}\arg\left(c\left(re^{i\theta}\right)\right).\label{eq:DefineEta}\end{eqnarray}

Then in the punctured neighborhood of $z=0$, the term $\arg\left(c\left(z\right)z^{\alpha}\right)$
is\begin{eqnarray}
\lim_{r\rightarrow0^{+}}\arg\left(c\left(re^{i\theta}\right)\left(re^{i\theta}\right)^{\alpha}\right) & = & \mathrm{mods}\left(\eta\left(\theta\right)+\alpha\theta,2\pi\right).\end{eqnarray}

Let $\hat{\eta}$ denote $\eta\left(\theta\right)$ in formula (\ref{eq:DefineEta})
in the common case when $\eta\left(\theta\right)$ is independent
of $\theta$ for all $-\pi<\theta\leq\pi$ not precluded by constraints
provided by the user or introduced by the computer algebra system.
Solving $\mathrm{mods}\left(\hat{\eta}+\alpha\theta_{c},2\pi\right)=\pi$
for the critical angles $\theta_{c}$ gives\begin{equation}
-\pi\,<\,\theta_{c}=\dfrac{\left(2n+1\right)\pi-\hat{\eta}}{\alpha}\,\leq\,\pi,\label{eq:SolutionForThetaC}\end{equation}
with integer $n$. Solving the inequalities in (\ref{eq:SolutionForThetaC})
for $n$ gives for $\alpha>0$ \begin{equation}
\left\lfloor \dfrac{\hat{\eta}}{2\pi}-\dfrac{1}{2}-\dfrac{\alpha}{2}\right\rfloor <n\leq\left\lfloor \dfrac{\hat{\eta}}{2\pi}-\dfrac{1}{2}+\dfrac{\alpha}{2}\right\rfloor ,\end{equation}
\emph{versus} for $\alpha<0$\begin{equation}
\left\lceil \dfrac{\hat{\eta}}{2\pi}-\dfrac{1}{2}+\dfrac{\alpha}{2}\right\rceil \leq n<\left\lceil \dfrac{\hat{\eta}}{2\pi}-\dfrac{1}{2}-\dfrac{\alpha}{2}\right\rceil .\end{equation}

There are no such angles if $\left|\alpha\right|<1$ and $\hat{\eta}$
is sufficiently close to 0, in which case $\omega\equiv0$. More generally
let $\underline{\eta}$ be a lower bound and $\bar{\eta}$ be an upper
bound on $\eta\left(\theta\right)$ over $-\pi<\theta\leq\pi$ not
excluded by any constraints. Then there are \textsl{no} critical angles
if \begin{equation}
\left(\left|\alpha\right|-1\right)\pi<\underline{\eta}\;\wedge\;\overline{\eta}<\left(1-\left|\alpha\right|\right)\pi.\label{eq:etaForomegaEq0}\end{equation}

Here is how we can proceed when none of the above tests are beneficial:
As $z\rightarrow0$,\begin{equation}
c(z)z^{\alpha}+g(z)\,=\, c(z)z^{\alpha}\left(1+\dfrac{g(z)}{c(z)z^{\alpha}}\right)\,\rightarrow\, c(z)z^{\alpha}.\end{equation}
Therefore the solution curves to equations (\ref{eq:ArgczToAlphaEqPi})
and (\ref{eq:ArgSeriesEqPi}) pair to form cusps where they don't
coincide to cancel. Along a critical angle $\theta_{c}$, $\arg\left(c(z)z^{\alpha}\right)=\pi$.
If also $\arg\left(1+g(z)/\left(c(z)z^{\alpha}\right)\right)=0$ along
$\theta_{c}$ near $z=0$, then $\theta=\theta_{c}$ is also the companion
solution to equation (\ref{eq:ArgSeriesEqPi}) near $z=0$. The $\arg\left(\ldots\right)$
terms for equations (\ref{eq:ArgczToAlphaEqPi}) and (\ref{eq:ArgSeriesEqPi})
have opposite signs in formula (\ref{eq:OmegaUnconditional}), so
these jumps cancel within the radius of convergence. Since $\arg\left(1+g(z)/\left(c(z)z^{\alpha}\right)\right)$
can't jump near $z=0$, $\omega$ is then identically 0 in the angular
neighborhood of $\theta_{c}$ near $z=0$.

If $\omega=0$ in the angular neighborhood of every critical angle,
then $\omega\equiv0$. For example, this is true for $\ln\left(-z^{2}+z^{3}\right)$
and $\ln\left(-z^{-2}+z\right)$, which don't satisfy Proposition
\ref{pro:omegaForExponentDivisibility} or inequality (\ref{eq:etaForomegaEq0}).
Even if there are some non-zero cusps, constraints might exclude those
cusps for a non-infinitesimal distance from $z=0$. Even if there
are some included cusps, it might be possible to omit either the $2\pi$
case or the $-2\pi$ case as follows: 

For $\alpha>0$, as $\theta$ increases through $\theta_{c}$, $\arg\left(c\cdot\!\left(re^{i\theta}\right)^{\alpha}\right)$
increases with $\theta$ on both sides of a jump down from $\pi$
to $(-\pi)^{+}$. If $\arg\left(1+g(z)/\left(cz^{\alpha}\right)\right)>0$
along $\theta_{c}$ near $z=0$, then from equation (\ref{eq:OmegaIs0OrPlusOrMinus2Pi}),
$\omega=-2\pi$ along and clockwise of $\theta_{c}$ until but excluding
the curved solution to equation (\ref{eq:ArgSeriesEqPi}). If instead
$\arg\left(1+g(z)/\left(cz^{\alpha}\right)\right)<0$ along $\theta_{c}$
near $z=0$, then $\omega=2\pi$ counter-clockwise of $\theta_{c}$
through the curved solution to equation (\ref{eq:ArgSeriesEqPi}).

Similarly for $\alpha<0$, for which $\arg\left(c\cdot\!\left(re^{i\theta}\right)^{\alpha}\right)$
decreases on both sides of a jump up from $(-\pi)^{+}$ to $\pi$
as $\theta$ increases through $\theta_{c}$: If $\arg\left(1+g(z)/\left(cz^{\alpha}\right)\right)>0$
along $\theta_{c}$ near $z=0$, then $\omega=-2\pi$ along and counter-clockwise
of $\theta_{c}$ until but excluding the curved solution to equation
(\ref{eq:ArgSeriesEqPi}). If instead $\arg\left(1+g(z)/\left(cz^{\alpha}\right)\right)<0$
along $\theta_{c}$ near $z=0$, then $\omega=2\pi$ clockwise of
$\theta_{c}$ through the curved solution to equation (\ref{eq:ArgSeriesEqPi}).

Thus we can omit the $2\pi$ case if for all critical directions $\arg\left(1+g(z)/\left(cz^{\alpha}\right)\right)\geq0$,
or we can omit the $-2\pi$ case if for all critical directions $\arg\left(1+g(z)/\left(cz^{\alpha}\right)\right)<0$.
For example, we can omit the $2\pi$ case for $\ln\left(iz^{2}+iz^{4}\right)$
and we can omit the $-2\pi$ case for $\ln\left(z+iz^{2}\right)$.

Notice that although $\omega$ can be $2\pi$ either clockwise or
counter-clockwise of a critical direction, $\omega$ can't be $2\pi$
\textsl{along} a critical direction.

For real $t$ let\begin{eqnarray*}
\mathrm{signum}(t) & := & \begin{cases}
-1, & \mathrm{if}\; t<0,\\
0, & \mathrm{if}\; t=0,\\
1, & \mathrm{if}\; t>0.\end{cases}\end{eqnarray*}
To determine whether expression $\arg\left(1+g(z)/\left(cz^{\alpha}\right)\right)$
is zero, positive or negative along $\theta_{c}$ near (but not at)
$z=0$, we could attempt computing\[
\underset{r\rightarrow0^{+}}{\lim}\,\mathrm{signum}\left(\arg\left(1+\dfrac{g\left(re^{i\theta_{c}}\right)}{c\cdot\!\left(re^{i\theta_{c}}\right)^{\alpha}}\right)\right).\]
However, such limits are beyond the capabilities of most computer
algebra systems if $g$ has more than one term -- particularly if
any of the exponents are fractional and/or any of the coefficients
depend on $z$. A simpler surrogate within the radius of convergence
is to instead compute\begin{equation}
\underset{r\rightarrow0^{+}}{\lim}\mathrm{\, signum}\left(\Im\left(\dfrac{g\left(re^{i\theta_{c}}\right)}{c\cdot\!\left(re^{i\theta_{c}}\right)^{\alpha}}\right)\right).\label{eq:limSignumIm}\end{equation}
However, this limit is also often beyond the capabilities of most
computer algebra systems. Fortunately there is an easily-computed
surrogate for formula (\ref{eq:limSignumIm}): Let $s\left(re^{i\theta}\right)$
be the coefficient and $\beta$ be the exponent of the lowest-degree
term of $g(z)/\left(c(z)z^{\alpha}\right)$ for which\begin{eqnarray}
\left(I_{c}\left(\dfrac{g(z)}{cz^{\alpha}},\theta_{c}\right):=\underset{r\rightarrow0^{+}}{\lim}\Im\left(s(re^{i\theta_{c}})e^{i\beta\theta_{c}}\right)\right) & \neq & 0,\label{eq:DefineAngleOfs}\end{eqnarray}
if any such term exists. Let $I_{c}\left(\ldots,\theta_{c}\right):=0$
if no such $s\left(re^{i\theta}\right)$ exists. The imaginary part
of $s\left(re^{i\theta_{c}}\right)\left(re^{i\theta_{c}}\right)^{\beta}$
dominates the imaginary parts of all subsequent terms of $g(z)$ along
$\theta_{c}$ if any exist. Therefore $I_{c}$ is a more-easily computed
surrogate for determining whether $\arg\left(1+g(z)/\left(cz^{\alpha}\right)\right)$
is zero, positive or negative along $\theta_{c}$ near $z=0$. The
term limits for computing $I_{c}$ are trivial in the common case
when a coefficient is a numeric constant, and easy even if a coefficient
is a typical sub-polynomial function of $z$. We can use $\Omega$
given by formula (\ref{eq:OmegaUnconditional}) or (\ref{eq:OmegaIs0OrPlusOrMinus2Pi})
if a coefficient contains an indeterminant other than $z$ that isn't
sufficiently constrained to decide the sign of the imaginary part
and if no imaginary parts were non-zero for previous terms.

The order to which the logand series is computed might not reveal
the lowest-degree term for which one of the $I_{c}\left(\ldots,\theta_{c}\right)$
isn't 0, thus affecting the resulting value of $\Omega$. For example,
if the series for $\ln\left(z^{2}-iz^{3}-iz^{6}\right)$ is computed
to $o\left(z^{n}\right)$ with $n\geq4$, then the dominant term is\[
\ln\left(z^{2}\right)+\begin{cases}
2\pi & \arg\left(1-iz-iz^{4}+\cdots+o\left(z^{n}\right)\right)+\arg\left(z^{2}\right)\leq-\pi,\\
0 & \mathrm{otherwise}.\end{cases}\]
However, the piecewise term is absent if the series is computed to
$o\left(z^{3}\right)$, making it incorrect by about $2\pi i$ for
values such as $z=i/10-10^{-6}$ and $z=-i/10+10^{-6}$. It is disturbing
that computing additional terms of a logand can thus change the zero-degree
term of the logarithm series. Moreover, it is awkward for algorithms
that compute additional terms incrementally, such as described by
Norman \cite{Norman}. Such is the nature of partial information along
or near a branch cut.

A way to avoid this annoyance is to use the original logand expression
in the $\arg\left(\ldots\right)$ sub-expressions of alternative (\ref{eq:OmegaUnconditional})
or (\ref{eq:OmegaIs0OrPlusOrMinus2Pi}) rather than using a truncated
series for that logand. However, the original logand isn't always
available: Perhaps as the logand we are given a truncated series,
not knowing a closed-form expression that it approximates. Even if
we did know, using the original expression can make computing a series
in one step give a different result than series composition. Such
compositional inconsistency is an undesirable property.

More seriously, including the original non-series expression as a
proper sub-expression of a series result is worse than simply returning
the original expression rather than a series: Presumably the user
requested the series to obtain insight about the behavior of the function,
or to enable symbolic operations that otherwise couldn't be done,
or to enable a numeric approximation. Returning a result that contains
the original expression as a proper sub-expression thwarts all of
these objectives. The annoyance of order-dependent coefficients is
vastly preferable. A consolation is that a zero-degree term that differs
so dramatically from that of the infinite series is at least appropriate
for a nearby problem across the branch cut. 
\begin{rem}
\label{rem:ExplictCurveAndSeriesThereof}If the truncated logand series
is simple enough so that for remaining cusps we can solve\begin{eqnarray}
\arg\left(c\left(x+iy\right)\left(x+iy\right)^{\alpha}+g\left(x+iy\right)\right) & = & \pi\label{eq:ArgSeriesRectEqPi}\end{eqnarray}
for $x\left(y\right)$ or for $y\left(x\right)$, or else solve\begin{eqnarray}
\arg\left(c\left(re^{i\theta}\right)\left(re^{i\theta}\right)^{\alpha}+g\left(re^{i\theta}\right)\right) & = & \pi\label{eq:ArgSeriesPolarEqPi}\end{eqnarray}
for $\theta\left(r\right)$ or for $r\left(\theta\right)$, then we
can construct a more candidly explicit representation of $\omega$.
For example, with $\ln\left(z^{2}+z^{3}\right)$,\begin{eqnarray}
\Omega & = & \begin{cases}
2\pi & \mathrm{if}\:\,\Im\left(z\right)<0\,\wedge\,0<\Re\left(z\right)\leq\dfrac{\sqrt{1+3\,\Im\left(z\right)^{2}}-1}{3},\\
-2\pi & \mathrm{if}\:\,\Im\left(z\right)>0\,\wedge\,0\leq\Re\left(z\right)<\dfrac{\sqrt{1+3\,\Im\left(z\right)^{2}}-1}{3},\\
0 & \mathrm{otherwise}.\end{cases}\label{eq:ExplicitCurve}\end{eqnarray}
There is catastrophic cancellation here computing $\sqrt{1+3\Im\left(z\right)^{2}}-1$
for $\left|\Im\left(z\right)\right|\ll1$. This can be avoided by
expanding the expression into the series $\Im\left(z\right)^{2}/2-3\Im\left(z\right)^{4}/8+\ldots$,
which is also more consistent with the user's request for a \textsl{series}
result. Perhaps generalized series reversion could be used to obtain
an explicit truncated series solution to equation (\ref{eq:ArgSeriesRectEqPi})
or (\ref{eq:ArgSeriesPolarEqPi}) even when we can't solve them exactly
-- at least when the coefficients are all numeric and the exponents
are all non-negative integers.
\end{rem}

\begin{rem}
\label{rem:omegaRealInGeneral}When $z$ is real, we can almost always
simplify $\omega_{\mathrm{real}}$ to either 0 or a two-piece result
that is $-2\pi$ for negative $z$ and/or positive $z$, but 0 everywhere
else as follows:

If 0 isn't a critical angle or $I_{c}(g(z)/\left(cz^{\alpha}\right),0)\leq0$,
then $\omega_{\mathrm{real}}=0$ for $z\geq0$. Otherwise $\omega_{\mathrm{real}}=-2\pi$
for $z>0$.

If $\pi$ isn't a critical angle or $I_{c}(g(z)/\left(cz^{\alpha}\right),\pi)\leq0$,
then $\omega_{\mathrm{real}}=0$ for $z\leq0$. Otherwise $\omega_{\mathrm{real}}=-2\pi$
for $z<0$.
\end{rem}

\subsubsection{The degree-0 case\label{sub:The-degree-0-case}}

If $\alpha=0$ and $\overline{\eta}<\pi$, then $\omega\equiv0$ because
for formula (\ref{eq:OmegaIs0OrPlusOrMinus2Pi}), $\arg\left(1+g(z)/\left(c(z)z^{\alpha}\right)\right)\rightarrow0$.
If instead $\alpha=0$ and $\arg\left(c\left(z\right)\right)\equiv\pi$
throughout all $z$ near $z=0$ that aren't excluded by any constraints,
then $\arg\left(1+g(z)/\left(c(z)z^{\alpha}\right)\right)+\arg\left(c\left(z\right)\right)$
can't be less than $-\pi$, because $\arg\left(1+g(z)/\left(c(z)z^{\alpha}\right)\right)\rightarrow0$.
Therefore we can at least omit the $2\pi$ case. Thus using $\omega_{\pi,0}$
to denote $\omega$ for this special case of $z=0$ being \textsl{in}
the branch cut,\begin{eqnarray}
\omega_{\pi,0} & = & \begin{cases}
-2\pi & \mathrm{if}\:\arg\left(1+\dfrac{g(z)}{c(z)}\right)>0,\\
0 & \mathrm{otherwise}.\end{cases}\label{eq:DefineOmegaSubPi}\end{eqnarray}

There are singularities wherever $g(z_{s})/c(z_{s})=-1$, and those
singularities aren't at the expansion point $z=0$. Consequently the
radius of convergence doesn't exceed the least of those $\left|z_{s}\right|$.
Therefore\begin{eqnarray}
\omega_{\pi,0} & = & \begin{cases}
-2\pi & \mathrm{if}\:0<\arg\left(\dfrac{g(z)}{c(z)}\right)<\pi,\\
0 & \mathrm{otherwise}.\end{cases}\label{eq:OmegaPiOfArgGOnC}\end{eqnarray}
Formula (\ref{eq:OmegaPiOfArgGOnC}) avoids adding a small magnitude
quantity to 1.0 near $z=0$, so this formula is more likely to be
correct than formula (\ref{eq:DefineOmegaSubPi}) with approximate
arithmetic.

Whenever we are comparing $\arg\left(f\left(z\right)\right)$ with
$-\pi/2$, 0, $\pi/2$, or $\pi$, we can often simplify the test
further and make it more accurate for approximate numbers by writing
the comparison in terms of $\Re\left(f\left(z\right)\right)$ and/or
$\Im\left(f\left(z\right)\right)$ -- particularly when $f\left(z\right)$
has more than one term. Thus\begin{eqnarray}
\omega_{\pi,0} & = & \begin{cases}
-2\pi & \mathrm{if}\:\Im\left(\dfrac{g(z)}{c(z)}\right)>0,\\
0 & \mathrm{otherwise}.\end{cases}\label{eq:OmegaPiOfImGOnC}\end{eqnarray}

This test can often be simplified further: Let $b(z)z^{\sigma}$ be
the dominant term of $g(z)/c(z)$, and let\begin{eqnarray}
\tau(\theta) & = & \lim_{r\rightarrow0^{+}}\arg\left(b\left(re^{i\theta}\right)\right).\end{eqnarray}
Let $\hat{\tau}$ denote $\tau$ in the common case where it is independent
of $\theta$ for all $0<\theta\leq\pi$ that aren't excluded by any
constraints.

In the neighborhood of $z=0$, as $\theta$ increases from $\left(-\pi\right)^{+}$
through $\pi$, $\arg\left(g(z)/\left(c(z)z^{\alpha}\right)\right)$,
$\arg\left(b(z)z^{\sigma}\right)$ and $\arg\left(\hat{\tau}z^{\alpha}\right)$
increase from (-$\sigma\pi)^{+}+\hat{\tau}$ through $\sigma\pi+\hat{\tau}$,
but jumping down from $\pi$ to $\left(-\pi\right)^{+}$ as $\theta$
increases past every critical angle where $\mathrm{\mathrm{mods}}(\sigma\theta_{c}+\hat{\tau},2\pi)=\pi$.
Expression $\Im\left(g(z)/c(z)\right)$ changes sign at those places
and also at every critical angle where $\mathrm{\mathrm{mods}}(\sigma\theta_{c}+\hat{\tau},2\pi)=0$.
There are no critical angles of either type if $0<\sigma<\nicefrac{1}{2}$
and $\tau$ is sufficiently close to $\pi/2$ or $-\pi/2$. More specifically
if $\underline{\tau}$ is a lower bound on $\tau(\theta)$ and $\overline{\tau}$
is an upper bound, then $\omega_{\pi,0}\equiv0$ when\begin{equation}
\left(\sigma-1\right)\pi<\underline{\tau}\:\wedge\:\overline{\tau}<-\sigma\pi.\label{eq:tauForomegaEq0}\end{equation}
If instead\begin{equation}
\sigma\pi<\underline{\tau}\:\wedge\:\overline{\tau}<\left(1-\sigma\right)\pi,\label{eq:tauForomegaEqMinus2PiExceptzEq0}\end{equation}
 then\label{rem:omegaRealInGeneral-1}\begin{eqnarray}
\omega_{\pi,0} & = & \begin{cases}
0 & \mathrm{if}\: z=0,\\
-2\pi & \mathrm{otherwise}.\end{cases}\label{eq:omegaPiIs0OnlyAtzEq0}\end{eqnarray}
As examples, $\omega_{\pi,0}\equiv0$ for $\ln\left(-1+iz^{1/4}\right)$,
whereas equation (\ref{eq:omegaPiIs0OnlyAtzEq0}) applies to $\ln\left(-1-iz^{1/4}\right)$.
\begin{rem}
\label{rem:BorderlineCase}If critical angles occur only at the edges
of regions not excluded by constraints, then we can compute $I_{c}\left(g(z)/c(z),\ldots\right)$
at those critical angles and at one included non-critical angle to
attempt simplifying formula (\ref{eq:OmegaPiOfImGOnC}). For example
with $\ln\left(-1+iz^{1/2}+z\right)$, the one critical angle is $\pi$,
along which $I_{s}$ is non-positive. $I_{c}$ is also non-positive
along the included non-critical angle $-\pi/2$, so $\omega_{\pi,0}\equiv0$.
\end{rem}
In contrast for $\ln\left(-1-iz^{1/2}+z\right)$, the one critical
angle is $\pi$, along which $I_{c}$ is non-positive. However, $I_{c}$
is positive along the included non-critical angle $\pi/2$. Therefore
$\omega$ is 0 for the ray $z\leq0$ but $-2\pi$ everywhere else.

If a critical angle is interior to an included region, then we can
simply use formula (\ref{eq:OmegaPiOfImGOnC}). Remark \ref{rem:omegaRealInGeneral}
is also applicable to $\omega_{\pi,0}$ for real $z$.
\begin{rem}
\label{rem:CasesForLiteralDominantCoefs}For an example such as $\ln\left(c+z^{2}+z^{3}\right)$
where a value such as 1, -1 or 0 could subsequently be substituted
for the literal constant $c$, a completely correct result should
piecewise account for all possible cases. 
\end{rem}

\subsection{Incorrect extraction of the dominant coefficient\label{sub:ExtractionOfDomCoef}}

A logarithm of a product is more concise and more efficient to approximate
numerically than a sum of logarithms of the factors. For this reason,
users often prefer a result that contains a logarithm of a product
rather than an equivalent sum of logarithms.

However, at this time many existing generalized Puiseux-series implementations
always distribute the logarithm of the dominant term over its coefficient
and cofactor. This distribution is justified for a hierarchical series
if the coefficient is itself a series in logarithms or nested logarithms
depending on $z$ -- at least when this distribution unifies two otherwise
different logarithms. For example, it is more consistent to have $\ln\left(z\right)$
throughout a series than to have $\ln\left(z\right)$ in some places
and $\ln\left(-2z\right)$ in other places.

However, this distribution provides an additional opportunity for
incorrect results caused by ignoring the $\Upsilon i$ term in equation
(\ref{eq:LnOfProd}). Applying this transformation twice and simplifying
gives\begin{eqnarray}
\ln\left(c\left(z\right)z^{\alpha}+g(z)\right) & \rightarrow & \left(\Phi i+\ln\left(c\left(z\right)\right)+\ln\left(z^{\alpha}\right)\right)+\ln\left(1+\frac{g(z)}{c\left(z\right)z^{\alpha}}\right),\end{eqnarray}
where\begin{eqnarray}
\Phi & = & \Omega+\arg\left(c(z)z^{\alpha}\right)-\arg\left(c(z)\right)-\arg\left(z^{\alpha}\right),\label{eq:PhiEqOmegaPlusUnconditional}\\
 & = & \Omega+\begin{cases}
2\pi & \mathrm{if}\:\arg\left(c(z)\right)+\arg\left(z^{\alpha}\right)\leq-\pi,\\
-2\pi & \mathrm{if}\:\arg\left(c(z)\right)+\arg\left(z^{\alpha}\right)>\pi,\\
0 & \mathrm{otherwise},\end{cases}\label{eq:PhiEqOmegaPlusConditional}\\
 & = & \arg\left(c\left(z\right)z^{\alpha}+g(z)\right)-\arg\left(1+\frac{g(z)}{c\left(z\right)z^{\alpha}}\right)-\arg\left(c(z)\right)-\arg\left(z^{\alpha}\right),\label{eq:PhiEq4Args}\\
 & = & \begin{cases}
2\pi & \mathrm{if}\:\arg\left(c(z)\right)+\arg\left(z^{\alpha}\right)+\arg\left(1+\dfrac{g(z)}{c(z)z^{\alpha}}\right)\leq-\pi,\\
-2\pi & \mathrm{if}\:\arg\left(c(z)\right)+\arg\left(z^{\alpha}\right)+\arg\left(1\!+\!\dfrac{g(z)}{c(z)z^{\alpha}}\right)>\pi,\\
0 & \mathrm{otherwise}.\end{cases}\label{eq:PhiIs2PiMinus2PiOr0}\end{eqnarray}
Alternatives (\ref{eq:PhiEqOmegaPlusUnconditional}) and (\ref{eq:PhiEqOmegaPlusConditional})
are advantageous when $\omega$ is a constant, because only one term
occurs in the arguments of $\arg\left(\ldots\right)$. Otherwise alternative
(\ref{eq:PhiIs2PiMinus2PiOr0}) is more concise and candid.

Let $\underline{\upsilon}$ be a lower bound and $\overline{\upsilon}$
be an upper bound on $\mathrm{mods}\left(\alpha\theta,2\pi\right)$
for all angles $-\pi<\theta\leq\pi$ that aren't excluded by constraints,
with $\underline{\eta}$ and $\overline{\eta}$ being corresponding
bounds on $\eta\left(\theta\right)$ defined by equation (\ref{eq:DefineEta}).
In equation (\ref{eq:PhiEqOmegaPlusConditional}) the $2\pi$ case
can be omitted if\begin{eqnarray}
\underline{\eta}+\underline{\upsilon} & > & -\pi,\end{eqnarray}
and/or the $-2\pi$ case can be omitted if\begin{eqnarray}
\overline{\eta}+\overline{\upsilon} & < & \pi.\end{eqnarray}
 The same is true for alternative (\ref{eq:PhiIs2PiMinus2PiOr0})
because $\arg\left(1+g(z)/\left(c(z)z^{\alpha}\right)\right)\rightarrow0$.

We can use $\underline{\eta}=\overline{\eta}=\hat{\eta}$ for the
common case where $\eta$ is independent of $\theta$. For unrestricted
complex $z$ we can use $\overline{\upsilon}=\min\left(1,\left|\alpha\right|\right)\pi$
and $\underline{\upsilon}=-\overline{\upsilon}$. For unrestricted
real $z$ we can use the intervals\begin{eqnarray}
\left[\underline{\upsilon},\overline{\upsilon}\right] & = & \begin{cases}
\left[\alpha\pi,0\right] & \mathrm{if}\:-1<\alpha<0,\\
\left[0,0\right] & \mathrm{if}\:\alpha=0,\\
\left[0,\alpha\pi\right] & \mathrm{if}\:0<\alpha\leq1,\\
\left[-\pi,\pi\right] & \mathrm{otherwise}.\end{cases}\end{eqnarray}

Alternative (\ref{eq:PhiEq4Args}) reveals that jumps in $\omega$
near $z=0$ can occur only where $\arg\left(z^{\alpha}\right)=\pi$,
where $\arg\left(c(z)\right)=\pi$, and where $\arg\left(c\left(z\right)z^{\alpha}+g(z)\right)=\pi$.
When $\arg\left(c(z)\right)\neq0$, the three solution sets for these
three equations typically don't pair to cancel or form cusps. Instead
they typically form non-cusp wedges of values $-2\pi$, 0 and $2\pi$.
Therefore, if there are critical angles interior to the included directions,
then:\vspace{-0.1in}

\begin{itemize}
\item Non-zero local $\phi$ occur in infinitely more directions than non-zero
$\omega$. Therefore it is less likely that any constraints will preclude
all of the non-zero wedges or even all of the positive ones or all
of the negative ones.\vspace{-0.1in}

\item It is impossible for both edges of every wedge to coincide and thereby
cancel to give $\phi\equiv0$ for all directions near $z=0$.
\end{itemize}
Thus we often pay dearly for distributing the logarithm over a non-positive
coefficient and its co-factor, which is unnecessary in many applications.
\begin{rem}
For real $z$, $\lim_{z\rightarrow0^{+}}\arg\left(z^{\alpha}\right)=0$
and $\lim_{z\rightarrow0}\arg\left(1+g(z)/\left(c(z)z^{\alpha}\right)\right)\rightarrow0$.
Therefore equation (\ref{eq:PhiIs2PiMinus2PiOr0}) reveals that $\phi=0$
for $z\geq0$ if $\lim_{z\rightarrow0^{+}}\arg\left(c(z)\right)\notin\left\{ -\pi,\pi\right\} $.
When $\lim_{z\rightarrow0^{+}}\arg\left(c(z)\right)\in\left\{ -\pi,\pi\right\} $,
then we can compute $I_{c}\left(g(z)/\left(c(z)z^{\alpha}\right),0\right)$
from definition (\ref{eq:DefineAngleOfs}) to decide whether $\phi$
is $2\pi$, $-2\pi$ or 0 for $z>0$. Similarly $\lim_{z\rightarrow0^{-}}\arg\left(z^{\alpha}\right)=\mathrm{mods}\left(\alpha\pi,2\pi\right)$.
Therefore if\[
\lim_{z\rightarrow0^{-}}\left(\mathrm{mods}\left(\alpha\pi,2\pi\right)+\arg\left(c(z)\right)\right)\notin\left\{ -\pi,\pi\right\} ,\]
then $\phi=0$ for $z\leq0$. Otherwise we can compute $I_{c}\left(g(z)/\left(c(z)z^{\alpha}\right),\pi\right)$
to determine whether $\phi$ is $2\pi$, $-2\pi$ or 0 for $z<0$.
We can then combine these results to obtain $\phi_{\mathrm{real}}$.
\end{rem}

\subsection{\label{sub:Extracting-Dominant-Exponent}Incorrect extraction of
the dominant exponent}

If the dominant exponent is neither 0 nor 1 and either the coefficient
of the dominant term is 1 or we have distributed the logarithm over
it, then we can consider extracting the exponent of the dominant power.

\subsubsection{Principal branch}

The relevant universally correct principal-branch identity for extracting
the exponent of $\ln\left(z^{\alpha}\right)$ is\begin{eqnarray}
\ln\left(u^{\alpha}\right) & \equiv & \alpha\ln(u)+\xi i,\label{eq:LnOfPower}\end{eqnarray}
 where\begin{eqnarray}
\xi & = & \arg\left(u^{\alpha}\right)-\alpha\arg(u),\label{eq:XiEqArgs}\\
 & = & 2\pi\left\lfloor \dfrac{1}{2}-\dfrac{\alpha\arg(u)}{2\pi}\right\rfloor .\label{eq:XiEqFloor}\end{eqnarray}
These alternatives can be derived from identity (\ref{eq:LnAbs})
together with $\ln\left(\left|u^{\alpha}\right|\right)=\alpha\ln\left(\left|u\right|\right)$
and the fact that $\arg\left(u^{\alpha}\right)=\mathrm{mods}\left(\alpha\arg\left(u\right),2\pi\right)$.
Alternative (\ref{eq:XiEqFloor}) is more accurate for approximate
arithmetic and more candid because it is manifestly piecewise-constant
integer multiples of $2\pi$.

Expression $\xi$ can be simplified to a single constant $2n\pi$
if\begin{equation}
(2n-1)\pi\,<\,\alpha\arg(u)\,\leq\,(2n+1)\pi,\end{equation}
for an integer $n$ together with the smallest and largest $\arg(u)$
in $(-\pi,\pi]$ that aren't excluded by any constraint. For example,
$\xi\equiv0$ if \begin{equation}
-\pi\,<\,\alpha\arg(u)\,\leq\,\pi.\label{eq:xiEq0}\end{equation}
If instead $(2n-1)\pi<\alpha\arg(u)\leq(2n+3)\pi$ for all included
$\arg\left(u\right)$, then $\xi$ can be more candidly represented
as\begin{equation}
\begin{cases}
2n\pi & \mathrm{if}\:\alpha\arg(u)\leq(2n+1)\pi,\\
(2n+2)\pi & \mathrm{otherwise}.\end{cases}\end{equation}
Extraction of a fractional power has the benefit of converting a troublesome
fractional power into a benign multiplication by a fraction.

Also, there is more justification for extraction of the dominant exponent
for a hierarchical series wherein the coefficient can be a series
in a logarithm or nested logarithm of $z$.

In our case, $u=z$, and combining this transformation with distribution
over the dominant term and the coefficient thereof gives the overall
transformation\begin{eqnarray}
\ln\left(c\left(z\right)z^{\alpha}+g(z)\right) & \rightarrow & \left(\Psi i+\ln\left(c\left(z\right)\right)+\alpha\ln\left(z\right)\right)+\ln\left(1+\frac{g(z)}{c\left(z\right)z^{\alpha}}\right),\end{eqnarray}
where\begin{eqnarray}
\Psi & \!\!=\!\! & \Phi+\arg\left(z^{\alpha}\right)-\alpha\arg\left(z\right),\label{eq:PsiEqPhiPlus2Args}\\
 & \!\!=\!\! & \Phi+2\pi\!\left\lfloor \dfrac{1}{2}-\dfrac{\alpha\arg(z)}{2\pi}\right\rfloor ,\label{eq:PsiEqPhiPlusFloor}\\
 & \!\!=\!\! & \Omega+\arg\left(c(z)z^{\alpha}\right)-\arg\left(c(z)\right)-\alpha\arg\left(z\right),\label{eq:PsiEqOmegaPlusArgs}\\
 & \!\!=\!\! & \Omega+2\pi\!\left\lfloor \dfrac{1}{2}-\dfrac{\alpha\arg(z)}{2\pi}\right\rfloor +\begin{cases}
2\pi & \mathrm{if}\:\arg\left(c(z)\right)+\arg\left(z^{\alpha}\right)\leq-\pi,\\
-2\pi & \mathrm{if}\:\arg\left(c(z)\right)+\arg\left(z^{\alpha}\right)>\pi,\\
0 & \mathrm{otherwise},\end{cases}\label{eq:PsiEqOmegaPlusFloorPlusConditional}\\
 & \!\!=\!\! & \arg\left(c\left(z\right)z^{\alpha}+g(z)\right)-\arg\left(1+\frac{g(z)}{c\left(z\right)z^{\alpha}}\right)-\arg\left(c\left(z\right)\right)-\alpha\arg\left(z\right),\label{eq:PsiEq4Args}\\
 & \!\!=\!\! & 2\pi\!\left\lfloor \dfrac{1}{2}\!-\!\dfrac{\alpha\arg(z)}{2\pi}\right\rfloor +\begin{cases}
2\pi & \!\mathrm{\!\! if}\:\arg\!\left(c(z)\right)\!+\!\arg\!\left(z^{\alpha}\right)\!+\!\arg\!\left(1\!+\!\dfrac{g(z)}{c(z)z^{\alpha}}\right)\leq-\pi,\\
\!-2\pi & \mathrm{\!\!\! if}\:\arg\!\left(c(z)\right)\!+\!\arg\!\left(z^{\alpha}\right)\!+\!\arg\!\left(1\!+\!\dfrac{g(z)}{c(z)z^{\alpha}}\right)>\pi,\\
0 & \mathrm{\!\!\! otherwise}.\end{cases}\label{eq:PsiEqFloorPlusConditional}\end{eqnarray}
Let $\psi$ denote a perhaps simpler local version of $\Psi$ near
$z=0$. Alternative (\ref{eq:PsiEqPhiPlusFloor}) is advantageous
when $\psi$ is non-constant but $\phi$ is simple. Alternative (\ref{eq:PsiEqOmegaPlusFloorPlusConditional})
is advantageous when $\psi$ and $\phi$ are non-constant but $\omega$
is simple. Otherwise alternative (\ref{eq:PsiEqFloorPlusConditional})
is most candid and least subject to catastrophic cancellation.

Equation (\ref{eq:PsiEq4Args}) reveals that the critical angles are
the union of the solutions to $\arg\left(z\right)=\pi$, $\arg\left(c(z)z^{\alpha}\right)=\pi$,
and (if $c\left(z\right)$ depends on $z$) $\arg\left(c(z)\right)=\pi$.
The piecewise-constant values of $\Psi$ are also multiples of $2\pi$,
but no longer limited to be one of $-2\pi$, 0 or $2\pi$. Therefore
with or without constraints on $\arg\left(z\right)$, the local version
$\psi$ is even less likely than $\phi$ to be identically 0 or otherwise
constant, and likewise for $\psi_{\mathrm{real}}$. However, we will
see that the use of $\Psi$ rather than $\Omega$ is mandatory for
the fractional power of a series. Therefore it is important to simplify
$\Psi$ as much as is practical: The floor term in alternatives (\ref{eq:PsiEqPhiPlusFloor}),
(\ref{eq:PsiEqOmegaPlusFloorPlusConditional}) and (\ref{eq:PsiEqFloorPlusConditional})
can be simplified as discussed at the beginning of this sub-subsection.
Subsection \ref{sub:ExtractionOfDomCoef} describes how to simplify
the conditional term in alternative (\ref{eq:PsiEqOmegaPlusFloorPlusConditional}).
Simplification of the conditional term in alternative (\ref{eq:PsiEqFloorPlusConditional})
is similar because $\arg\left(1+g(z)/\left(c(z)z^{\alpha}\right)\right)\rightarrow0$.

For real $z$, we can compute a local version of $\Psi$, $\psi_{\mathrm{real}}$,
by computing $\phi_{\mathrm{real}}$, then adding\[
2\pi\left\lfloor \dfrac{1}{2}-\dfrac{\alpha}{2}\right\rfloor \]
to the case for negative $z$.

\subsubsection{\label{sub:RealBranchLnFracPow}The real branch for ln of fractional
powers}

The \textsl{Derive} computer algebra system offers a \textsf{Branch}
control variable that, if assigned the value \textsf{Real} causes
fractional powers having odd denominators to use the real rather than
principal branch when a radicand is negative. TI-Nspire bundles this
choice into its \textsf{Real} mode.

For example, in \textsf{Real} mode $\left(-1\right)^{1/3}\rightarrow-1$
rather than $1/2+i\sqrt{3}/2$, and $\left(-1\right)^{2/3}\rightarrow1$
rather than $-1/2+i\sqrt{3}/2$. This default option is much appreciated
by students and faculty who fear or loathe non-real numbers, which
is a majority of TI's customers most of the time. However, this choice
is incompatible with formulas (\ref{eq:LnOfPower}) through (\ref{eq:XiEqArgs}).
For example, these principal-branch formulas give\begin{eqnarray*}
\ln\left(x^{2/3}\right) & \rightarrow & \frac{2}{3}\ln\left(x\right)+\left(\arg\left(x^{2/3}\right)-\frac{2}{3}\arg\left(x\right)\right)i.\end{eqnarray*}

If we subsequently substitute -8 for $x$, then this result simplifies
to\begin{eqnarray*}
\frac{2}{3}\ln\left(-8\right)+\left(\left(\arg\left(\left(-8\right)^{2/3}\right)-\frac{2}{3}\arg\left(-8\right)\right)i\right) & \rightarrow & 2\ln\left(2\right)+\frac{2}{3}\pi i,\end{eqnarray*}
whereas for the real branch, $\ln\left((-8)^{2/3}\right)\rightarrow\ln\left(4\right)\rightarrow2\ln\left(2\right).$ 

The easiest way to simplify logarithms of fractional powers using
the real branch is to refrain from extracting reduced fractional exponents
having an odd denominator unless the sign of the base can be determined
-- perhaps with the assistance of a constraint on $\arg\left(z\right)$.
Moreover, for real-branch mode we should also refrain from extracting
integer exponents that are even. For example, we shouldn't extract
the exponent 12 from $\ln\left(x^{12}\right)$ for a real $x$ because\begin{eqnarray*}
\ln\left(x^{12}\right) & \rightarrow & 12\ln\left(x\right)+\left(\arg\left(x^{12}\right)-12\arg\left(x\right)\right)i\\
 & \rightarrow & 12\ln\left(x\right)-\begin{cases}
12\pi i & \mathrm{if}\: x<0,\\
0 & \mathrm{otherwise.}\end{cases}\end{eqnarray*}

Even though this result is real for all real $x$, this appearance
of $i$ in the result is unwelcome to most real-mode customers. Even
principal-branch complex-mode customers would rather not see $i$
in an expression if it can as concisely or more concisely be expressed
without $i$ -- especially if the result is real for all real values
of interest for any variables therein. However, to help reduce the
number of distinct logands in a result for $x\in\mathbb{R}$, we can
extract all of an even exponent but 2:\begin{eqnarray*}
\ln\left(x^{12}\right) & \rightarrow & 6\ln\left(x^{2}\right).\end{eqnarray*}

Another alternative for real $x$ is $\ln\left(x^{12}\right)\rightarrow12\ln\left(|x|\right)$.
However, absolute-value functions are troublesome and best avoided
where possible. For example, if in the same series we use\begin{eqnarray*}
\ln\left(x^{3}\right) & \rightarrow & 3\ln\left(x\right)-\begin{cases}
2\pi i & \mathrm{if}\: x<0,\\
0 & \mathrm{otherwise,}\end{cases}\end{eqnarray*}
then we obtain a series that contains both $\ln\left(x\right)$ and
$\ln\left(|x|\right)$.

\section{Branch bugs for fractional powers\label{sec:BugsForFracPows}}

Some series implementations can give incorrect series results for
fractional powers. For example, with $\left(z^{2}+z^{3}\right)^{3/2}$
most implementations currently incorrectly give the equivalent of\[
\sum_{k=0}^{\infty}\dfrac{3\left(-1\right)^{k}\left(2k\right)!}{(2k-3)(2k-1)k!^{2}4^{k}}z^{k+3}\]
or a truncated version of it. This series is incorrect by a factor
of -1 left of $x=\left(\sqrt{3y^{2}+1}-1\right)/3$ for $z=x+iy$.

Table \ref{tab:SeriesForFracPow} lists one or more correct alternative
multiplicative correction factors for the Puiseux series of some fractional-power
expressions expanded about complex $z=0$ or real $x=0$.

\begin{table}[h]
\caption{For $n\!\geq\!4$, $x,y\!\in\mathbb{\! R}$, $z\!=\! x\!+\! iy$:
series$\left(u^{\beta},\,\mbox{var}\!=\!0,\, o\!\left(\mathrm{var}^{n}\right)\right)\rightarrow(-1)^{\lambda}\left(S\!+\!\cdots\right)$\label{tab:SeriesForFracPow}}
\begin{tabular}{|c|c|c|c|c|}
\hline 
\# & $u^{\beta}$ & $\lambda$, or $L=(-1)^{\lambda}$ & $S$ & why\tabularnewline
\hline
\hline 
\negthinspace{}\negthinspace{}1a\negthinspace{}\negthinspace{} & $\left(z^{2}+z^{3}\right)^{\nicefrac{3}{2}}$ & $\!\! L\!=\!\begin{cases}
\!1, & \left(\Im\!\left(z\right)\!\geq\!0\wedge\Re\!\left(z\right)\!\geq\!\Im\!\left(z\right)^{2}\!/2\!+\!...\!+\! o\!\left(\Im\!\left(z\right)^{n}\right)\right)\\
\! & \!\!\!\!\vee\left(\Im\!\left(z\right)\!<\!0\wedge\Re\!\left(z\right)\!>\!\Im\!\left(z\right)^{2}\!/2\!+\!...\!+\! o\!\left(\Im\!\left(z\right)^{n}\right)\right)\\
\!-1, & \mathrm{\!\!\! otherwise}\end{cases}\!\!$ & $z^{3}$ & $\begin{array}{c}
(\ref{eq:L1HatForHalfInteger})\\
(\ref{eq:DefineL2Hat})\\
\mathrm{\! rem}.\:\ref{rem:ExplictCurveAndSeriesThereof}\!\end{array}$\tabularnewline
\hline 
\negthinspace{}\negthinspace{}1b\negthinspace{}\negthinspace{} &  & $L=\begin{cases}
1 & \mathrm{if\:}\left(\Re\!\left(z\right)>0\,\vee\,\Re\!\left(z\right)=0\,\wedge\,\Im\!\left(z\right)\geq0\right)\\
 & \quad=\left(-\pi<\arg\left(z^{2}\right)+\arg\left(1+z\right)\leq\pi\right)\\
-1 & \mathrm{otherwise}\end{cases}$ & $z^{3}$ & $\begin{array}{c}
(\ref{eq:L1HatForHalfInteger})\\
(\ref{eq:DefineL2Hat})\end{array}$\tabularnewline
\hline 
\negthinspace{}\negthinspace{}1c\negthinspace{}\negthinspace{} &  & $\lambda=\frac{3}{2\pi}\left(\arg\left(z^{2}+z^{3}\right)-\arg\left(1+z\right)-2\arg\left(z\right)\right)$ & $z^{3}$ & \negthinspace{}\negthinspace{}(\ref{eq:PsiEq4Args},\ref{eq:LambdaViaPsi})\negthinspace{}\negthinspace{}\tabularnewline
\hline 
\negthinspace{}\negthinspace{}2a\negthinspace{}\negthinspace{} & $\left(z^{2}-iz^{3}\right)^{\nicefrac{3}{2}}$ & $L=\begin{cases}
1 & \mathrm{if}\:\Re\left(z\right)>0\,\vee\,\Re\left(z\right)=0\,\wedge\,\Im\left(z\right)\geq0\\
-1 & \mathrm{otherwise}\end{cases}$ & $z^{3}$ & $\begin{array}{c}
(\ref{eq:PsiEqPhiPlusFloor})\\
(\ref{eq:L1HatForHalfInteger})\end{array}$\tabularnewline
\hline 
\negthinspace{}\negthinspace{}2b\negthinspace{}\negthinspace{} &  & $\lambda=\left\lfloor \nicefrac{1}{2}-\left(\arg z\right)/2\right\rfloor $ & $z^{3}$ & (\ref{eq:PsiEqPhiPlusFloor})\tabularnewline
\hline 
\negthinspace{}\negthinspace{}3a\negthinspace{}\negthinspace{} & $\left(z^{2}+z^{3}\right)^{\nicefrac{7}{4}}$ & $\!\! L\!=\!\begin{cases}
\!1, & \left(\Im\!\left(z\right)\!\geq\!0\wedge\Re\!\left(z\right)\!\geq\!\Im\!\left(z\right)^{2}\!/2\!+\!...\!+\! o\!\left(\Im\!\left(z\right)^{n}\right)\right)\\
\! & \!\!\!\!\vee\left(\Im\!\left(z\right)\!<\!0\wedge\Re\!\left(z\right)\!>\!\Im\!\left(z\right)^{2}\!/2\!+\!...\!+\! o\!\left(\Im\!\left(z\right)^{n}\right)\right)\\
\! i, & \left(\Im\!\left(z\right)\!\geq\!0\wedge\Re\!\left(z\right)\!<\!\Im\!\left(z\right)^{2}\!/2\!+\!...\!+\! o\!\left(\Im\!\left(z\right)^{n}\right)\right)\\
\!-i, & \mathrm{\!\!\! otherwise}\end{cases}\!\!$ & $z^{7/2}$ & $\begin{array}{c}
(\ref{eq:DefineL1})\\
(\ref{eq:DefineL2})\\
\!\mathrm{rem}.\:\ref{rem:ExplictCurveAndSeriesThereof}\!\end{array}$\tabularnewline
\hline 
\negthinspace{}\negthinspace{}3b\negthinspace{}\negthinspace{} &  & $\lambda=\frac{7}{4\pi}\left(\arg\left(z^{2}+z^{3}\right)-\arg\left(1+z\right)-2\arg\left(z\right)\right)$ & $z^{7/2}$ & \negthinspace{}\negthinspace{}(\ref{eq:PsiEq4Args},\ref{eq:LambdaViaPsi})\negthinspace{}\negthinspace{}\tabularnewline
\hline 
4 & $\!\left(-1\!+\! iz^{\nicefrac{1}{4}}\!+\! z\right)^{\nicefrac{3}{2}}\!$ & $L=1$ & $-i$ & \negthinspace{}\negthinspace{}(\ref{eq:tauForomegaEq0},\ref{eq:PsiEqOmegaPlusFloorPlusConditional})\negthinspace{}\negthinspace{}\tabularnewline
\hline 
5 & $\!\left(-1\!+\! iz^{\nicefrac{1}{2}}\!+\! z\right)^{\nicefrac{3}{2}}\!$ & $L=1$ & $-i$ & rem.\negthinspace{} \ref{rem:BorderlineCase}\tabularnewline
\hline 
\negthinspace{}\negthinspace{}6a\negthinspace{}\negthinspace{} & $\!\left(-1\!-\! z^{2\!}-\! z^{3}\right)^{\nicefrac{3}{2}}\!$ & $L=\begin{cases}
-1 & \mathrm{if\:}\left(y\geq0\wedge x<y^{2}/2-3y^{4}/8+...\right)\vee\\
 & \quad\left(y<0\wedge x\geq y^{2}/2-3y^{4}/8+...\right)\\
1 & \mathrm{otherwise}\end{cases}$ & $-i$ & $\begin{array}{c}
(\ref{eq:L1HatForHalfInteger})\\
(\ref{eq:DefineL2Hat})\\
\mathrm{\! rem}.\:\ref{rem:ExplictCurveAndSeriesThereof}\!\end{array}$\tabularnewline
\hline 
\negthinspace{}\negthinspace{}6b\negthinspace{}\negthinspace{} &  & $\!\lambda=\frac{3}{2\pi}\left(\arg\left(-1\!-\! z^{2}\!-\! z^{3}\right)-\arg\left(1\!+\! z^{2}\!+\! z^{3}\right)\!-\!\pi\right)\!$ & $-i$ & \negthinspace{}\negthinspace{}(\ref{eq:OmegaPiOfImGOnC},\ref{eq:LambdaViaPsi})\negthinspace{}\negthinspace{}\tabularnewline
\hline 
7 & $\!\left(c\!+\! z^{2}\right)^{\nicefrac{3}{4}}|\: c\!\neq\!0\!$ & \negthinspace{}$L=\begin{cases}
i & \mathrm{if}\:\arg c=\pi\wedge\Im\left(z^{2}/c\right)>0\\
1 & \mathrm{otherwise}\end{cases}$ & $c^{\nicefrac{3}{4}}$ & $\begin{array}{c}
\!\mathrm{rem}.\:\ref{rem:CasesForLiteralDominantCoefs}\!\\
(\ref{eq:OmegaPiOfImGOnC})\end{array}$\tabularnewline
\hline 
8 & $\!\left(-z^{-\nicefrac{1}{2}}\right)^{\nicefrac{1}{2}}\!$ & $L=\begin{cases}
-1 & \mathrm{if}\:\arg\left(z\right)<0\\
1 & \mathrm{otherwise}\end{cases}$ & $iz^{-\nicefrac{1}{4}}$ & $\begin{array}{c}
\mathrm{\! rem}.\:\ref{rem:OmegaEq0IfGEq0}\!\\
(\ref{eq:PsiEqOmegaPlusFloorPlusConditional})\end{array}$\tabularnewline
\hline 
\negthinspace{}\negthinspace{}9\negthinspace{}\negthinspace{} & $\left(\ln\left(z\right)+z\right)^{\nicefrac{3}{2}}$ & $L=1$ & $\ln\left(z\right)^{\nicefrac{3}{2}}$ & (\ref{eq:OmegaPiOfImGOnC})\tabularnewline
\hline 
\negthinspace{}\negthinspace{}10\negthinspace{}\negthinspace{} & $\left(iz+z^{2}\right)^{\nicefrac{3}{2}}$ & $L=\begin{cases}
-1 & \mbox{if}\;\Re\left(z\right)<0\,\wedge\,\Im\left(z\right)\geq0\\
1 & \mbox{otherwise}\end{cases}$ & $\!\left(-1\right)^{\frac{3}{4}}z^{\frac{3}{2}}\!$ & $\begin{array}{c}
(\ref{eq:DefineAngleOfs})\\
(\ref{eq:DefineL2HatUsingOmega})\end{array}$\tabularnewline
\hline 
\negthinspace{}\negthinspace{}11\negthinspace{}\negthinspace{} & $\left(z^{-1}+1\right)^{\nicefrac{3}{2}}$ & $L=\begin{cases}
-1 & \mbox{if}\; z<0\\
1 & \mbox{otherwise}\end{cases}$ & $z^{\nicefrac{-3}{2}}$ & $\begin{array}{c}
\mathrm{\! prop}.\:\ref{pro:omegaForExponentDivisibility}\!\\
(\ref{eq:DefineL2HatUsingOmega})\end{array}$\tabularnewline
\hline 
\negthinspace{}\negthinspace{}12\negthinspace{}\negthinspace{} & $\left(-2+x\right)^{\nicefrac{3}{2}}$ & $L=1$ & $-i$ & \negthinspace{}prop.\negthinspace{} \ref{pro:omegaRealForRealCfAndIntegerExpon}\negthinspace{}\tabularnewline
\hline 
\negthinspace{}\negthinspace{}13\negthinspace{}\negthinspace{} & $\left(-1-\sqrt{x}\right)^{\nicefrac{7}{2}}$ & $L=\begin{cases}
1 & \mathrm{if}\: x\geq0\\
-1 & \mathrm{otherwise}\end{cases}$ & $-i$ & (\ref{eq:DefineAngleOfs})\tabularnewline
\hline 
\negthinspace{}\negthinspace{}14a\negthinspace{}\negthinspace{} & $\left(x^{\nicefrac{4}{3}}+x^{2}\right)^{\nicefrac{3}{2}}$ & $L=\begin{cases}
-1 & \mbox{if}\; x<0\\
1 & \mbox{otherwise}\end{cases}$ & $x^{2}$ & $\!\left(\ref{eq:DefineAngleOfs},\!\ref{eq:DefineL2HatUsingOmega}\right)\!$\tabularnewline
\hline 
\negthinspace{}\negthinspace{}14b\negthinspace{}\negthinspace{} & real branch & $L=1$ & $x^{2}$ & \negthinspace{}rem.\negthinspace{} \ref{rem:omegaRealForOddDenominators}\negthinspace{}\tabularnewline
\hline
\end{tabular}
\end{table}
The probable causes of an incorrect result are related to those for
logarithms of series. The usual algorithm for computing a numeric
power $\beta$ of a series requires that the dominant term be 1. Therefore
if the series of the radicand is $c\left(z\right)z^{\alpha}+g(z)$,
with $c\left(z\right)z^{\alpha}$ being the dominant term, we must:\vspace{-0.1in}
 
\begin{enumerate}
\item Factor out $c\left(z\right)z^{\alpha}$.\vspace{-0.1in}

\item Distribute the exponent $\beta$ over the three factors $\left(z\right),$
$cz^{\alpha}$ and $1+g\left(z\right)/\left(c\left(z\right)z^{\alpha}\right)$.\vspace{-0.1in}

\item Transform $\left(z^{\alpha}\right)^{\beta}$ to $z^{\alpha\beta}$.\vspace{-0.1in}

\item Compute the series for $\left(1+g\left(z\right)/\left(c\left(z\right)z^{\alpha}\right)\right)^{\beta}$
by the usual algorithm.\vspace{-0.1in}

\item Distribute $c\left(z\right)^{\beta}z^{\alpha\beta}$ over the terms
computed in step 5.
\end{enumerate}
Steps 2 and 3 can contribute to an angular rotation factor of the
form $(-1)^{\cdots}$ that should also be distributed in step 5.

\subsection{Principal-branch series of fractional powers \label{sub:Principal-branch-series-of-frac-pow}}

Zippel's formula \cite{ZippelMacsymaConference} generalizes to the
following universal principal-branch formula for the distribution
of real exponents over products:\begin{eqnarray}
(uv)^{\beta} & \equiv & \left(-1\right)^{\delta}u^{\beta}v^{\beta}\end{eqnarray}
where

\begin{eqnarray}
\delta & = & \frac{\beta}{\pi}\left(\arg\left(\left(uv\right)^{\beta}\right)-\arg(u^{\beta})-\arg(v^{\beta})\right).\end{eqnarray}

For real $\alpha$ and $\beta$, a universal principal-branch formula
for transforming a power of a power to an unnested power is\begin{eqnarray}
\left(w^{\alpha}\right)^{\beta} & \rightarrow & \left(-1\right)^{\zeta}w^{\alpha\beta}\end{eqnarray}
where\begin{eqnarray}
\zeta & := & \dfrac{\beta}{\pi}\left(\arg\left(w^{\alpha}\right)-\alpha\arg\left(w\right)\right).\end{eqnarray}
This can be derived from the identities

\begin{eqnarray}
\left|p\right| & \equiv & (-1)^{-\arg\left(p\right)/\pi}p,\label{eq:absViaArg}\\
\left|q^{\alpha}\right|^{\beta} & \equiv & \left|q\right|^{\alpha\beta}.\end{eqnarray}

Applying these formulas to our case gives\begin{eqnarray}
\left(c\left(z\right)z^{\alpha}+g\left(z\right)\right)^{\beta} & \rightarrow & \left(c(z)z^{\alpha}\left(1+\frac{g\left(z\right)}{c\left(z\right)z^{\alpha}}\right)\right)^{\beta}\nonumber \\
 & \rightarrow & (-1)^{\Lambda}c\left(z\right)^{\beta}z^{\alpha\beta}\left(1+\frac{g\left(z\right)}{c\left(z\right)z^{\alpha}}\right)^{\beta}\end{eqnarray}
where modulo 2, which suffices for $(-1)^{\Lambda}$, \begin{eqnarray}
\Lambda & \equiv & \dfrac{\beta}{\pi}\left(\arg\left(c\left(z\right)z^{\alpha}+g\left(z\right)\right)-\arg\left(1\!+\!\dfrac{g(z)}{c(z)z^{\alpha}}\right)-\arg(c\left(z\right))-\alpha\arg(z)\right)\nonumber \\
 & = & \dfrac{\beta}{\pi}\Psi,\label{eq:LambdaViaPsi}\end{eqnarray}
with $\Psi$ defined by alternatives (\ref{eq:PsiEqPhiPlus2Args})
through (\ref{eq:PsiEqFloorPlusConditional}). Let\begin{eqnarray}
L & := & (-1)^{\Lambda}.\label{eq:DefineL}\\
 & = & L_{1}L_{2},\label{eq:FracPowFloorFactor}\end{eqnarray}
where\begin{eqnarray}
L_{1} & = & \left(-1\right)^{2\beta\left\lfloor \dfrac{1}{2}\,-\,\dfrac{\alpha\arg\left(z\right)}{2\pi}\right\rfloor },\label{eq:DefineL1}\\
L_{2} & = & (-1)^{\beta\Omega/\left(2\pi\right)}\begin{cases}
(-1)^{2\beta} & \mathrm{if}\:\arg(c\left(z\right))+\arg(z^{\alpha})\leq-\pi,\\
(-1)^{-2\beta} & \mathrm{if}\:\arg(c\left(z\right))+\arg(z^{\alpha})>\pi,\\
1 & \mathrm{otherwise},\end{cases}\label{eq:DefineL2UsingOmega}\\
 & = & \begin{cases}
(-1)^{2\beta} & \mathrm{if}\:\arg(c\left(z\right))+\arg(z^{\alpha})+\arg\left(1\!+\!\dfrac{g(z)}{c(z)z^{\alpha}}\right)\leq-\pi,\\
(-1)^{-2\beta} & \mathrm{if}\:\arg(c\left(z\right))+\arg(z^{\alpha})+\arg\left(1\!+\!\dfrac{g(z)}{c(z)z^{\alpha}}\right)>\pi,\\
1 & \mathrm{otherwise}.\end{cases}\label{eq:DefineL2}\end{eqnarray}

Alternative (\ref{eq:DefineL2UsingOmega}) is preferable for $L_{2}$
when $\omega$ is constant. Techniques described in subsection \ref{sub:Extracting-Dominant-Exponent}
can further simplify the floor sub-expression in $L_{1}$. Also, the
exponents of -1 in these formulas can be simplified and canonicalized
by replacing them with their near-symmetric residue modulo 2.
\begin{rem}
\label{rem:IntegerPowers}In the common case where $\beta$ is integer,
then\[
(-1)^{2\beta}\,=\,(-1)^{-2\beta}\,=\,(-1)^{2\beta\mathrm{\left\lfloor 1/2-\alpha\arg\left(z\right)/(2\pi)\right\rfloor }}\,=\,1,\]
so $L$ can be simplified to 1.
\end{rem}
\,

\vspace{-0.016in}

\begin{rem}
\label{rem:HalfIntegerPowers}In the next most common case where $\beta$
is a half-integer, $L_{1}$ can be expressed more candidly as\begin{eqnarray}
\hat{L}_{1} & = & \begin{cases}
1 & \mathrm{if}\:\mathrm{mods}\left(\pi-\alpha\arg(z),\,2\pi\right)\geq0,\\
-1 & \mathrm{otherwise},\end{cases}\label{eq:L1HatForHalfInteger}\end{eqnarray}
 and $L_{2}$ can be simplified to\begin{eqnarray}
\hat{L}_{2} & = & (-1)^{\beta\Omega/\left(2\pi\right)}\begin{cases}
1 & \mathrm{if}\:-\pi<\arg(c\left(z\right))+\arg(z^{\alpha})\leq\pi\\
-1 & \mathrm{otherwise}.\end{cases}\label{eq:DefineL2HatUsingOmega}\\
 & = & \begin{cases}
1 & \mathrm{if}\:-\pi<\arg(c\left(z\right))+\arg(z^{\alpha})+\arg\left(1\!+\!\dfrac{g(z)}{c(z)z^{\alpha}}\right)\leq\pi,\\
-1 & \mathrm{otherwise}.\end{cases}\label{eq:DefineL2Hat}\end{eqnarray}
It is easy and worthwhile to test for these cases. Alternative (\ref{eq:DefineL2HatUsingOmega})
is preferable when $\omega$ is a constant.
\end{rem}
Techniques described in subsection \ref{sub:Extracting-Dominant-Exponent}
can further simplify local equivalents to $L_{2}$ and $\hat{L}_{2}$
-- particularly for real $z$.

Also, if the local equivalent simplifies to a piecewise constant that
is one value $\ell_{0}$ at $z=0$ for which $0\,\ell_{0}$ simplifies
to 0, but another value $\ell_{*}$ everywhere else, then we can simply
use $\ell_{*}$ for all of the terms having positive degree, because
those powers of $z$ are 0 at $z=0$. This can dramatically simplify
the result because often there are no non-positive degree terms or
only one. For example, with real $x$ there is only one non-positive
degree term for\begin{eqnarray*}
\mathrm{series\!}\left(\!\left(-1+x-\! ix^{2}\right)^{7/2}\!,\, x=0,\, o\!\left(x^{2}\right)\right) & \!\!\rightarrow\!\! & \left(\!\begin{cases}
-i & \mathrm{\!\!\! if}\: x=0\\
i & \mathrm{\!\!\! otherwise}\end{cases}\right)-\dfrac{7}{2}ix+\dfrac{35}{8}ix^{2}+o\!\left(x^{2}\right).\end{eqnarray*}

Otherwise if $L$ doesn't simplify to a constant, then we have to
include a perhaps complicated \textsl{\emph{factor}} multiplying \textsl{every
term} of the result, because extracting the dominant coefficient and
exponent are mandatory for numeric powers of series. This is significantly
more annoying than the possible one piecewise-constant term for a
logarithm of a series. If we choose to distribute non-constant $L$
over every term, then the result is significantly bulkier and less
intelligible. If instead we choose to return an undistributed product
of this factor with a sum of terms, then we have denied the user what
they implicitly requested by using a function named series -- a \textsl{sum}
of terms.

If we are implementing the full generality of a hierarchical series,
then such recursively-represented series have some appeal, because
the jumps in the rotation angle make it have some properties of some
essential singularities, which dominate any power of $z$. However,
one can argue that such jumps belong in the coefficients because the
magnitude of the rotation factor is always 1, whereas the logarithmic
singularities that we already allow in the coefficients have infinite
magnitude at $z=0$.

It is easy for a human user to use an $\textrm{expand}\left(\ldots\right)$
function to distribute the factor over the terms after seeing an unexpanded
result. However, the user might be other functions whose authors must
be knowledgeable enough to realize that they should always apply $\textrm{expand}\left(\ldots,z\right)$
to the result of $\textrm{series}\left(\ldots\right)$ because it
could be an expandable product.

If we are not implementing the full generality of hierarchical series,
then a more serious disadvantage of not automatically distributing
the factor is that it requires a special field for a multiplicative
factor in the series data structure; and one such specialized exceptional
field is inadequate for adding series containing such multiplicative
factors if the series have different dominant exponents.

Expressions such as $(-1)^{2\beta}$ can alternatively be expressed
as $L:=e^{2\beta\pi i}$ or represented that way internally. However
for presentation in results, $(-1)^{2\beta}$ avoids a perhaps-unnecessary
$i$ and more obviously indicates that the factor has magnitude 1.
\begin{rem}
\label{rem:CombiningConditionalsForRealz}Whenever $z$ is real and
$L$ contains more than one piecewise factor, their product can often
be combined into a single factor by separately simplifying their product
under the alternative constraints $z<0$, $z=0$, and $z>0$, then
forming a single piecewise or constant factor accordingly if the three
values are constants. This process often simplifies the product to
the constant 1 when $\beta$ is a half integer.
\end{rem}

\subsection{Real-branch series of fractional powers}

For computing a fractional power of a series we must fully distribute
the exponent over the dominant coefficient and dominant power, then
combine the two exponents of the latter into a product. A way to accomplish
this correctly for real-branch mode is to use in this mode the following
rewrite rules for reduced exponents for a real expression $t$ together
with integers $m$ and $n$: \begin{eqnarray}
\arg\left(t^{2m/(2n+1)}\right) & \rightarrow & 0,\\
\arg\left(t^{\left(2m+1\right)/(2n+1)}\right) & \rightarrow & \arg\left(t\right).\end{eqnarray}

\section{Branch bugs for inverse trig \& hyperbolic series\label{sec:Branch-bugs-for-arc}}

Kahan \cite{Kahan} is often cited as a standard for defining the
branch cuts of the inverse trigonometric and inverse hyperbolic functions,
together with the principal values on those branch cuts and on the
branch points that end them. In comparison to \textsl{Derive}, his
definition of $\arctan\left(\ldots\right)$ has the disadvantage of
violating the useful identity\begin{eqnarray}
\arctan\left(\dfrac{z}{\sqrt{1-z^{2}}}\right) & \equiv & \arcsin\left(z\right).\end{eqnarray}
However, this section uses Kahan's definitions, which are used by
TI-Nspire.

This section doesn't discuss the six secondary functions such as $\mathrm{arcsec}\left(z\right):=\arccos\left(1/z\right)$
and $\mathrm{arccot}\left(z\right):=\pi/2-\arctan\left(z\right)$
because their series are easily computed from such definitions. Beware,
however, that different mathematical software might use different
definitions. For example, \textsl{Mathematica} uses\[
\mathrm{arccot}\left(z\right):=\begin{cases}
\pi/2 & \mathrm{if}\: z=0,\\
\dfrac{i}{2}\left(\ln\left(1-i/z\right)-\ln\left(1+i/z\right)\right) & \mathrm{otherwise},\end{cases}\]
which is not equivalent everywhere to $\pi/2-\arctan\left(z\right)$.

One way to compute series for the primary inverse trigonometric and
inverse hyperbolic functions is by their integral definitions. However,
the handling of non-constant coefficients that can include nested
logarithms and piecewise constants is then problematic, and this method
alone doesn't specify the degree-0 term of the result.

Consequently, this article instead uses Kahan's definitions in terms
of logarithms and powers. 

Many of the tests in this section entail comparing the imaginary part
of a series $V$ with 0. For both real and complex $z$ these tests
can often be simplified by determining critical angles and perhaps
also computing $I_{c}\left(V,\theta_{c}\right)$, as described in
Sub-subsections \ref{sub:lnNon0DominantExponent} and \ref{sub:The-degree-0-case}.
Some of the following formulas instead entail comparing the real part
of a series $V$ with 0. They often can similarly be simplified by
determining critical angles and perhaps also computing $I_{c}\left(iV,\theta_{c}\right)$,
which maps the real part to an imaginary part.

\subsection{Series for arctanh}

The inverse hyperbolic tangent of a series $U$ can be computed from
the identity\begin{eqnarray}
\mbox{arctanh}\left(U\right) & \equiv & \frac{\ln\left(1+U\right)-\ln\left(1-U\right)}{2}.\label{eq:atanhViaLn}\end{eqnarray}

The logarithms in formula (\ref{eq:atanhViaLn}) can contribute expressions
involving $\omega$, $\phi$ or $\psi$ as discussed in Section \ref{sec:Branch-bugs-for-ln}.
The result is candid in most cases. However, if series $U=cz^{\alpha}+g(z)$
has a negative dominant exponent $\alpha$, then the zero-degree term
computed by (\ref{eq:atanhViaLn}) is\begin{equation}
\ln\left(cz^{\alpha}\right)-\ln\left(-cz^{\alpha}\right)+\left(\Omega\left(U+1\right)+\Omega\left(-U+1\right)\right).\end{equation}
This expression can and should be replaced with the simpler local
equivalent

\begin{equation}
\dfrac{\pi i}{2}\begin{cases}
\pm1 & \mathrm{if\:}z=0,\\
1 & \mathrm{if\:}\Im\left(U\right)>0\:\vee\:\Im\left(U\right)=0\,\wedge\,\Re\left(U\right)<0\\
-1 & \mathrm{otherwise},\end{cases},\end{equation}
which is also less prone to rounding errors.

\subsection{Series for arctan\label{sub:Series-for-arctan}}

The inverse tangent of a series $U$ can be computed from the identity\begin{eqnarray}
\arctan\left(U\right) & \equiv & -i\,\mbox{arctanh}\left(iU\right).\label{eq:atanViaAtanh}\end{eqnarray}

The degree-0 term of $\arctan\left(U\right)$ is\begin{equation}
\begin{cases}
0 & \!\!\!\mathrm{if\:}\alpha>0,\\
\pm\pi/2 & \mathrm{\!\!\! if\:}\alpha<0\,\wedge\, z=0,\\
\pi/2 & \mathrm{\!\!\! if\:}\alpha<0\,\wedge\,\left(\Re\left(u\right)>0\,\vee\,\Re\left(u\right)=0\,\wedge\,\Im\left(u\right)>0\right),\\
-\pi/2 & \mathrm{\!\!\! if\:}\alpha<0\,\wedge\,\left(\Re\left(u\right)<0\,\vee\,\Re\left(u\right)=0\,\wedge\,\Im\left(u\right)<0\right),\\
\frac{1}{2}\left(-i\ln\left(\frac{ibz^{\sigma}}{2}\right)+\Omega\left(ibz^{\sigma}+ih(z)\right)\right) & \mathrm{\!\!\! if\:}U=i+bz^{\sigma}+h(z),\\
\frac{1}{2}\left(i\ln\left(\frac{-ibz^{\sigma}}{2}\right)-\Omega\left(-ibz^{\sigma}-ih(z)\right)\right) & \mathrm{\!\!\! if\:}U=-i+bz^{\sigma}+h(z),\\
i\,\mathrm{arctanh}(t)+\frac{1}{2}\Omega_{\pi,0}\left(1-t+ig(z)\right) & \mathrm{\!\!\! if}\: U=ti+g(z)\,\wedge\, t>1,\\
i\,\mathrm{arctanh}(t)-\frac{1}{2}\Omega_{\pi,0}\left(1+t-ig(z)\right) & \mathrm{\!\!\! if}\: U=ti+g(z)\,\wedge\, t<-1,\\
\arctan\left(c\left(z\right)\right) & \mathrm{\!\!\! otherwise}.\end{cases}\end{equation}
Formula (\ref{eq:atanViaAtanh}) should automatically achieve these
simplified forms except probably for the three cases where $\alpha<0$.

Table \ref{tab:SeriesAtan} lists some simplified correct 0-degree
terms for $\arctan\left(\ldots\right)$. Corresponding examples for
$\mathrm{arctanh}\left(\ldots\right)$ can be derived from (\ref{eq:atanViaAtanh}).

\begin{table}[h]
\caption{$\mbox{0-degree term of series}\left(\arctan\left(u\right),\, z\!=\!0,\, o\left(z^{n}\right)\right)$
with $z\in\mathbb{C},\: x\in\mathbb{R},\: n\geq3$ : \label{tab:SeriesAtan}}

\begin{tabular}{|c|r|l|}
\hline 
\# & $u$ & A correct 0-degree term near $z=0$\tabularnewline
\hline
\hline 
1 & $z^{-2}+z^{-1}$ & $\dfrac{\pi}{2}\begin{cases}
\pm1 & \mathrm{if}\: z=0\\
1 & \mathrm{if}\:-\pi/2<\arg\left(z^{-2}+z^{-1}\right)\leq\pi/2\\
-1 & \mathrm{otherwise}\end{cases}$\tabularnewline
\hline 
2 & $2i+ze^{z}$ & $\dfrac{\ln3}{2}i+\begin{cases}
\dfrac{\pi}{2} & \mathrm{if}\:\Re\left(z+z^{2}+\dfrac{z^{3}}{6}+\cdots+o\left(z^{n}\right)\right)\leq0\\
-\dfrac{\pi}{2} & \mathrm{otherwise}\end{cases}$\tabularnewline
\hline 
3 & $2i+z^{1/4}e^{z}$ & $\dfrac{\pi}{2}+\dfrac{\ln3}{2}i$\tabularnewline
\hline 
4 & $-2i+ze^{z}$ & $\dfrac{\ln3}{2}i+\begin{cases}
\dfrac{\pi}{2} & \mathrm{if}\: z\neq0\\
-\dfrac{\pi}{2} & \mathrm{otherwise}\end{cases}$\tabularnewline
\hline 
5 & $-2i-z^{1/4}e^{z}$ & $-\dfrac{\pi}{2}-\dfrac{\ln3}{2}i$\tabularnewline
\hline 
6 & $i+ze^{z}$ & $\dfrac{\pi-\ln\left(iz\right)+\ln2}{2}i-\begin{cases}
0 & \mathrm{\!\!\! if}\:\arg\!\left(1\!+\! z\!+\!\dfrac{z^{2}}{2}\!+\!\dfrac{z^{3}}{6}\!+\cdots+o\!\left(z^{n}\right)\right)+\arg\!\left(iz\right)\leq\pi\\
\pi & \mathrm{\!\!\! otherwise}\end{cases}$\tabularnewline
\hline 
7 & $i+ize^{z}$ & $\ln\left(\dfrac{z}{2}\right)-\dfrac{i}{2}$\tabularnewline
\hline 
8 & $-i+z^{2}e^{z}$ & $\dfrac{i\ln\left(\dfrac{-iz^{2}}{2}\right)}{2}+\begin{cases}
\pi & \mathrm{\!\! if}\:\arg\!\left(1\!+\! z\!+\!\dfrac{z^{2}}{2}\!+\!\dfrac{z^{3}}{6}\!+\cdots+o\!\left(z^{n}\right)\right)+\arg\left(-iz^{2}\right)\leq-\pi\\
\pi & \mathrm{\!\! if}\:\arg\!\left(1\!+\! z\!+\!\dfrac{z^{2}}{2}\!+\!\dfrac{z^{3}}{6}\!+\cdots+o\!\left(z^{n}\right)\right)+\arg\left(-iz^{2}\right)>\pi\\
0 & \mathrm{\!\! otherwise}\end{cases}$\tabularnewline
\hline 
9 & $-i+iz^{1/2}e^{z}$ & $\dfrac{i\ln\left(z/4\right)}{4}$\tabularnewline
\hline 
10 & $x^{-2}+x^{-1}$ & $\dfrac{\pi}{2}\begin{cases}
\pm1 & \mathrm{if}\: x=0\\
1 & \mathrm{if}\: x>0\\
-1 & \mathrm{otherwise}\end{cases}$\tabularnewline
\hline 
11 & $2i+x^{3/4}e^{x}$ & $\dfrac{\ln3}{2}i+\begin{cases}
\dfrac{\pi}{2} & \mathrm{if}\: x\geq0\\
-\dfrac{\pi}{2} & \mathrm{otherwise}\end{cases}$\tabularnewline
\hline 
12 & $2i+x^{1/2}e^{x}$ & $\pi/2+i\left(\ln3\right)/2$\tabularnewline
\hline 
13 & $-2i+x^{3/4}e^{x}$ & $-\dfrac{\ln3}{2}i+\begin{cases}
\dfrac{\pi}{2} & \mathrm{if}\: x>0\\
-\dfrac{\pi}{2} & \mathrm{otherwise}\end{cases}$\tabularnewline
\hline 
14 & $-2i-x^{3/4}e^{x}$ & $-\pi/2-i\left(\ln3\right)/2$\tabularnewline
\hline 
15 & $i+x^{5/4}e^{x}$ & $\dfrac{-\ln\left(ix^{5/4}/2\right)}{2}$\tabularnewline
\hline 
16 & $-i+x^{5/4}e^{x}$ & $\dfrac{\ln\left(-ix^{5/4}/2\right)}{2}$\tabularnewline
\hline
\end{tabular}
\end{table}

\subsection{Series for arcsinh, arcsin and arccos\label{sub:Series-for-arcsinh-etc}}

The inverse arcsine, arccosine and hyperbolic arcsine of a series
$U$ can be computed from\begin{eqnarray}
\mbox{arcsinh}\left(U\right) & \equiv & \ln\left(U+\sqrt{1+U^{2}}\right),\label{eq:asinhViaLn}\\
\arcsin\left(w\right) & \equiv & -i\,\mathrm{arcsinh}\left(iw\right),\label{eq:arcsinViaArcsinh}\\
\arccos\left(w\right) & \equiv & \dfrac{\pi}{2}-\arcsin\left(w\right).\label{eq:ArcCosViaArcSin}\end{eqnarray}
The square root might contribute a factor containing expressions involving
$\arg\left(\ldots\right)$ to all of its result terms. The subsequent
logarithm in identity (\ref{eq:asinhViaLn}) might then contribute
horrid nested instances of $\arg\left(\ldots\right)$ to a 0-degree
term in the result.

\textsl{Mathematica} 7.0.1.0 avoids this by instead using correction
terms and factors that are specific to inverse trigonometric and inverse
hyperbolic functions: The resulting arguments of $\arg\left(\ldots\right)$
are unnested and entail the terms of $U$, which are often fewer and
simpler than those of $1+U^{2}$ and $U+\sqrt{1+U^{2}}$. Extensive
numeric experiments provide convincing evidence that they are correct
in most instances. Most of the formulas in this subsection and the
following one are simplified and corrected versions of those corrections.
When using these corrections and identity (\ref{eq:asinhViaLn}),
it is important to use the rewrite $\ln\left(pz^{\alpha}\right)\rightarrow\ln(p)+\alpha\ln(z)$,
but suppress the branch corrections for logarithms and fractional
powers discussed in previous sections. It is also important to compute
intermediate series such as $\sqrt{1+U^{2}}$ to sufficient order
so that the final order is as requested.

A correction term or factor is unnecessary for $\mathrm{arcsinh}\left(U(z)\right)$
if the dominant degree of $U(z)$ is positive or if the dominant degree
is 0 and $U\left(0\right)$ isn't on either branch cut or either branch
point. Otherwise, here are the different cases:

\subsubsection{\label{sub:AsinhDeg0CfLtMinus1OrGt1TimesI}Case $\mathrm{arcsinh}\left(U(z)=\hat{c}i+g(z)\right)$
with $\hat{c}<-1\,\vee\,\hat{c}>1$}

If the dominant term of $U(z)$ is $\hat{c}i$ with $\hat{c}<-1\,\vee\,\hat{c}>1$,
then\begin{eqnarray*}
\mathrm{arcsinh}\left(\hat{c}i+g\left(z\right)\right) & = & \dfrac{i\pi\mathrm{\, sign}(\hat{c})}{2}+T\cdot\left(\mathrm{arccosh}\left(\hat{c}\right)+o\left(z^{0}\right)\right),\end{eqnarray*}
 where\begin{eqnarray}
T & = & \begin{cases}
1 & \mathrm{if\;}\mathrm{sign}(\hat{c})\,\Re\left(\, g(z)\right)\geq0,\\
-1 & \mathrm{otherwise.}\end{cases}\end{eqnarray}

\subsubsection{\label{sub:AsinhDeg0CfEqIOrMinusI}Case $\mathrm{arcsinh}\left(U(z)=\bar{c}i+g(z)\right)$
with $\overline{c}=1\,\vee\,\bar{c}=-1$}

If the dominant term of $U(z)$ is $\bar{c}i$ with $\overline{c}=1\,\vee\,\bar{c}=-1$
and the next non-zero term is $bz^{\sigma}$, then\begin{eqnarray*}
\mathrm{arcsinh}\left(\bar{c}i\!+\! bz^{\sigma}\!+\! h\!\left(z\right)\right) & = & \dfrac{i\pi\bar{c}}{2}+(-1)^{P}Q\cdot\left(i\bar{c}\sqrt{2ib}z^{\sigma/2}+o\left(z^{\sigma/2}\right)\right)\end{eqnarray*}
where\begin{eqnarray}
P & = & \left\lfloor \dfrac{1}{2}-\dfrac{\arg\left(i\bar{c}b+\dfrac{i\bar{c}h(z)}{z^{\sigma}}\right)+\sigma\arg(z)}{2\pi}\right\rfloor ,\\
Q & = & \begin{cases}
-1 & \mathrm{if\:}\arg\left(b\right)=\dfrac{\pi\bar{c}}{2}\wedge\Re\left(\dfrac{h(z)}{z^{\sigma}}\right)<0,\\
1 & \mathrm{otherwise}.\end{cases}\end{eqnarray}

Expression $i\bar{c}b+\dfrac{i\bar{c}h(z)}{z^{\sigma}}\rightarrow i\bar{c}b$
as $z\rightarrow0$, so the critical angles for $P$ are\begin{equation}
-\pi\,<\,\theta_{c}=\dfrac{\left(2n+1\right)\pi-\arg\left(i\bar{c}b\right)}{\sigma}\,\leq\,\pi\end{equation}
for all integer $n$ satisfying\begin{equation}
\left\lfloor \dfrac{\arg\left(i\bar{c}b\right)}{2\pi}-\dfrac{1}{2}-\dfrac{\sigma}{2}\right\rfloor <n\leq\left\lfloor \dfrac{\arg\left(i\bar{c}b\right)}{2\pi}-\dfrac{1}{2}+\dfrac{\sigma}{2}\right\rfloor .\end{equation}
There are no such angles if $0<\sigma<1$ and $\arg\left(i\bar{c}b\right)$
is sufficiently close to 0: If \begin{equation}
\left(\sigma-1\right)\pi\,<\,\arg\left(i\bar{c}b\right)\,<\,\left(1-\sigma\right)\pi,\end{equation}
then $P\equiv0$, which is easily tested.

If $0<\sigma\leq2$, then a more candid and accurate representation
of $(-1)^{P}$ is\begin{equation}
\begin{cases}
1 & \mathrm{if}\:-\pi<\arg\left(i\bar{c}b+\dfrac{i\bar{c}h(z)}{z^{\sigma}}\right)+\sigma\arg(z)\leq\pi,\\
-1 & \mathrm{otherwise}.\end{cases}\end{equation}

\subsubsection{\label{sub:AsinhNegDegImCf}Case $\mathrm{arcsinh\left(\ldots\right)}$
with negative dominant exponent and imaginary dominant coefficient}

Use formula (\ref{eq:asinhViaLn}) including the angle rotation factor
for the square root and the correction term for the logarithm. Brace
yourself for a truly ugly result.

\subsubsection{\label{sub:AsinhNegDegCfNotIm}Case $\mathrm{arcsinh\left(\ldots\right)}$
with negative dominant exponent and non-imaginary dominant coefficient}

If the dominant exponent $\alpha$ of $U(z)$ is negative and the
dominant coefficient $c$ isn't pure imaginary, then\begin{eqnarray}
\mathrm{arcsinh}\left(cz^{\alpha}+g(z)\right) & = & (-1)^{W}N\cdot\left(i\pi W+\dfrac{\ln\left(4c^{2}\right)}{2}+\alpha\ln\left(z\right)+o\left(z^{0}\right)\right).\end{eqnarray}
where\begin{eqnarray}
W & = & \left\lfloor \dfrac{1}{2}-\dfrac{\arg\left(\left(c+\dfrac{g(z)}{z^{\alpha}}\right)^{2}\right)+2\alpha\arg(z)}{2\pi}\right\rfloor ,\\
N & = & \begin{cases}
1 & \mathrm{if\:}\Re(c)>0,\\
-1 & \mathrm{otherwise}.\end{cases}\end{eqnarray}

Expression $\left(c+\dfrac{g(z)}{z^{\alpha}}\right)^{2}\rightarrow c^{2}$
as $z\rightarrow0$, so the critical angles for $W$ are\begin{equation}
-\pi\,<\,\theta_{c}=\dfrac{\left(2n+1\right)\pi-\arg\left(c^{2}\right)}{2\alpha}\,\leq\,\pi\end{equation}
for all integer $n$ satisfying\begin{equation}
\left\lfloor \dfrac{\arg\left(c^{2}\right)}{2\pi}-\dfrac{1}{2}-\dfrac{\alpha}{2}\right\rfloor <n\leq\left\lfloor \dfrac{\arg\left(c^{2}\right)}{2\pi}-\dfrac{1}{2}+\dfrac{\alpha}{2}\right\rfloor .\end{equation}
There are no such angles if $-1/2<\alpha<0$ and $\arg\left(c^{2}\right)$
is sufficiently close to 0: If\begin{equation}
\left(\alpha-1\right)\pi\,<\,\arg\left(c^{2}\right)\,<\,\left(1-\alpha\right)\pi,\end{equation}
then $W\equiv0$, which is easily tested.

If $-1\leq\alpha<0$, then more candidly and accurately, \begin{eqnarray}
W & = & \begin{cases}
1 & \mathrm{if}\:\arg\left(\left(c+\dfrac{g(z)}{z^{\alpha}}\right)^{2}\right)+2\alpha\arg(z)>\pi,\\
-1 & \mathrm{if}\:\arg\left(\left(c+\dfrac{g(z)}{z^{\alpha}}\right)^{2}\right)+2\alpha\arg(z)\leq-\pi,\\
0 & \mathrm{otherwise};\end{cases}\\
(-1)^{W} & = & \begin{cases}
1 & \mathrm{if}\:-\pi<\arg\left(\left(c+\dfrac{g(z)}{z^{\alpha}}\right)^{2}\right)+2\alpha\arg(z)\leq\pi,\\
-1 & \mathrm{otherwise}.\end{cases}\end{eqnarray}

Some incorrect series for $\mathrm{arcsinh}\left(\ldots\right)$,
$\arcsin\left(\ldots\right)$ and $\arccos\left(\ldots\right)$ in
current implementations are attributable to incorrect or omitted correction
terms and/or factors. Table \ref{tab:SeriesAsinh} lists some examples
for $\mathrm{arcsinh}\left(\ldots\right)$. Corresponding examples
for $\arcsin\left(\ldots\right)$ and $\arccos\left(\ldots\right)$
can be derived from (\ref{eq:arcsinViaArcsinh}) and (\ref{eq:ArcCosViaArcSin}).

\begin{table}[h]
\caption{$\mbox{Fir\mbox{st} two non-zero terms of series}\left(\mathrm{arcsinh}\left(u\right),\, z\!=\!0,\, o\left(z^{\infty}\right)\right)$
with $z\in\mathbb{C}$, $x\in\mathbb{R}$: \label{tab:SeriesAsinh}}

\begin{tabular}{|c|r|l|l|}
\hline 
\negthinspace{}\negthinspace{}\#\negthinspace{}\negthinspace{} & $u$ & First two non-0 terms of $\mathrm{arcsinh\,}u$ near $\mathrm{variable}=0$ & Sec.\tabularnewline
\hline
\hline 
\negthinspace{}\negthinspace{}1\negthinspace{}\negthinspace{} & -$2i+z^{2}+z^{3}\!$ & $-\dfrac{i\pi}{2}+\left(\begin{cases}
1 & \mathrm{if}\:\Re\left(z^{2}+z^{3}\right)>0\\
-1 & \mathrm{otherwise}\end{cases}\right)\left(\ln\left(2+\sqrt{3}\right)+\dfrac{iz^{2}}{\sqrt{3}}+\cdots\right)$ & \negthinspace{}\ref{sub:AsinhDeg0CfLtMinus1OrGt1TimesI}\negthinspace{}\tabularnewline
\hline 
\negthinspace{}\negthinspace{}2\negthinspace{}\negthinspace{} & $2i+z^{2}+z^{3}$ & $\dfrac{i\pi}{2}+\left(\begin{cases}
1 & \mathrm{if}\:\Re\left(z^{2}+z^{3}\right)\geq0\\
-1 & \mathrm{otherwise}\end{cases}\right)\left(-\dfrac{\ln\left(2+\sqrt{3}\right)}{2}-\dfrac{iz^{2}}{\sqrt{3}}+\cdots\right)$ & \negthinspace{}\ref{sub:AsinhDeg0CfLtMinus1OrGt1TimesI}\negthinspace{}\tabularnewline
\hline 
\negthinspace{}\negthinspace{}3\negthinspace{}\negthinspace{} & $2i+z^{1/4}$ & $\dfrac{i\pi}{2}+\ln\left(2+\sqrt{3}\right)-\dfrac{iz^{1/4}}{\sqrt{3}}+\cdots$ & \negthinspace{}\ref{sub:AsinhDeg0CfLtMinus1OrGt1TimesI}\negthinspace{}\tabularnewline
\hline 
\negthinspace{}\negthinspace{}4\negthinspace{}\negthinspace{} & $i+z^{2}+z^{3}$ & $\dfrac{i\pi}{2}+\left(-1\right)^{\left\lfloor \dfrac{1}{2}\,-\,\dfrac{\arg\left(-1+iz\right)+2\arg z}{2\pi}\right\rfloor }\left(iz+\cdots\right)$ & \negthinspace{}\ref{sub:AsinhDeg0CfEqIOrMinusI}\negthinspace{}\tabularnewline
\hline 
\negthinspace{}\negthinspace{}5\negthinspace{}\negthinspace{} & $i+iz^{2}+iz^{3}$ & $\!\dfrac{i\pi}{2}\!+\!\left(-1\right)^{\left\lfloor \dfrac{1}{2}\,-\,\dfrac{\arg\left(-1\!+\! iz\right)\!+\!2\arg z}{2\pi}\right\rfloor }\!\left(\!\begin{cases}
1 & \!\!\!\Re\!\left(z\right)\!\geq\!0\\
-1 & \mathrm{\!\!\! otherwise}\end{cases}\right)\!\left(\sqrt{2}z\!+\cdots\right)\!$ & \negthinspace{}\ref{sub:AsinhDeg0CfEqIOrMinusI}\negthinspace{}\tabularnewline
\hline 
\negthinspace{}\negthinspace{}6\negthinspace{}\negthinspace{} & $\!-i+iz^{2}+iz^{3}\!$ & $-\dfrac{i\pi}{2}+\left(-1\right)^{\left\lfloor \dfrac{1}{2}\,-\,\dfrac{\arg\left(-1+iz\right)+2\arg z}{2\pi}\right\rfloor }\left(i\sqrt{2}z+\cdots\right)$ & \negthinspace{}\ref{sub:AsinhDeg0CfEqIOrMinusI}\negthinspace{}\tabularnewline
\hline 
\negthinspace{}\negthinspace{}7\negthinspace{}\negthinspace{} & $-i-iz^{2}+z^{3}$ & $\!-\dfrac{i\pi}{2}\!+\!\left(-1\right)^{\left\lfloor \dfrac{1}{2}\,-\,\dfrac{\arg\left(-1\!+\! iz\right)\!+\!2\arg z}{2\pi}\right\rfloor }\!\left(\!\begin{cases}
1, & \mathrm{\!\!\!}\Re\!\left(z\right)\!\leq\!0\\
-1 & \mathrm{\!\!\! otherwise}\end{cases}\right)\!\left(-\sqrt{2}z\!+\cdots\right)\!$ & \negthinspace{}\ref{sub:AsinhDeg0CfEqIOrMinusI}\negthinspace{}\tabularnewline
\hline 
\negthinspace{}\negthinspace{}8\negthinspace{}\negthinspace{} & $iz^{-1}+1$ & ugh! & \negthinspace{}\ref{sub:AsinhNegDegImCf}\negthinspace{}\tabularnewline
\hline 
\negthinspace{}\negthinspace{}9\negthinspace{}\negthinspace{} & $z^{-2}+z^{-1}$ & $\!(-1)^{w}\left(\ln2\!-\!2\ln z\!+\!\pi iw\!+\! z\!+\!\cdots\right)|\, w\!=\!\left\lfloor \dfrac{1}{2}-\dfrac{\arg\left(1+2z+z^{2}\right)\!+\!4\arg z}{2\pi}\right\rfloor \!$ & \negthinspace{}\ref{sub:AsinhNegDegCfNotIm}\negthinspace{}\tabularnewline
\hline 
\negthinspace{}\negthinspace{}10\negthinspace{}\negthinspace{} & $-2i+x^{3/2}$ & $\left(-\dfrac{\pi i}{2}-\ln\left(2+\sqrt{3}\right)\begin{cases}
1 & \mathrm{if}\: x\leq0\\
-1 & \mathrm{otherwise}\end{cases}\right)-\dfrac{ix^{3/2}}{\sqrt{3}}+\cdots$ & \negthinspace{}\ref{sub:AsinhDeg0CfLtMinus1OrGt1TimesI}\negthinspace{}\tabularnewline
\hline 
\negthinspace{}\negthinspace{}11\negthinspace{}\negthinspace{} & $-2i-x^{3/2}$ & $\left(-\dfrac{\pi i}{2}-\ln\left(2+\sqrt{3}\right)\right)+\dfrac{ix^{3/2}}{\sqrt{3}}+\cdots$ & \negthinspace{}\ref{sub:AsinhDeg0CfLtMinus1OrGt1TimesI}\negthinspace{}\tabularnewline
\hline 
\negthinspace{}\negthinspace{}12\negthinspace{}\negthinspace{} & $2i+x^{3/4}$ & $\left(\dfrac{\pi i}{2}+\ln\left(2+\sqrt{3}\right)\begin{cases}
1 & \mathrm{if}\: x\geq0\\
-1 & \mathrm{otherwise}\end{cases}\right)-\dfrac{ix^{3/4}}{\sqrt{3}}+\cdots$ & \negthinspace{}\ref{sub:AsinhDeg0CfLtMinus1OrGt1TimesI}\negthinspace{}\tabularnewline
\hline 
\negthinspace{}\negthinspace{}13\negthinspace{}\negthinspace{} & $2i+x^{3/2}$ & $\left(\dfrac{\pi i}{2}+\ln\left(2+\sqrt{3}\right)\right)-\dfrac{ix^{3/2}}{\sqrt{3}}+\cdots$ & \negthinspace{}\ref{sub:AsinhDeg0CfLtMinus1OrGt1TimesI}\negthinspace{}\tabularnewline
\hline 
\negthinspace{}\negthinspace{}14\negthinspace{}\negthinspace{} & $i+x^{3/2}$ & $\dfrac{\pi i}{2}+\left(\begin{cases}
1 & \mathrm{if}\: x\geq0\\
-1 & \mathrm{otherwise}\end{cases}\right)\left(\left(1-i\right)x^{3/4}+\cdots\right)$ & \negthinspace{}\ref{sub:AsinhDeg0CfEqIOrMinusI}\negthinspace{}\tabularnewline
\hline 
\negthinspace{}\negthinspace{}15\negthinspace{}\negthinspace{} & $i-x^{3/2}$ & $\dfrac{\pi i}{2}-\left(1+i\right)x^{3/4}+\cdots$ & \negthinspace{}\ref{sub:AsinhDeg0CfEqIOrMinusI}\negthinspace{}\tabularnewline
\hline 
\negthinspace{}\negthinspace{}16\negthinspace{}\negthinspace{} & $-i+x^{5/2}$ & $-\dfrac{\pi i}{2}+\left(\begin{cases}
1 & \mathrm{if}\: x\geq0\\
-1 & \mathrm{otherwise}\end{cases}\right)\left(\left(1+i\right)x^{5/4}+\cdots\right)$ & \negthinspace{}\ref{sub:AsinhDeg0CfEqIOrMinusI}\negthinspace{}\tabularnewline
\hline 
\negthinspace{}\negthinspace{}17\negthinspace{}\negthinspace{} & $-i+x^{3/2}$ & $-\dfrac{\pi i}{2}+\left(1+i\right)x^{3/4}+\cdots$ & \negthinspace{}\ref{sub:AsinhDeg0CfEqIOrMinusI}\negthinspace{}\tabularnewline
\hline 
\negthinspace{}\negthinspace{}18\negthinspace{}\negthinspace{} & $x^{-2}+x^{-1}$ & $\left(\ln2-2\ln x+\begin{cases}
0 & \mathrm{if}\: x\geq0\\
2\pi i & \mathrm{otherwise}\end{cases}\right)+x+\cdots$ & \negthinspace{}\ref{sub:AsinhNegDegCfNotIm}\negthinspace{}\tabularnewline
\hline
\end{tabular}
\end{table}

\subsection{Branch corrections for arccosh}

Subsection \ref{sub:Series-for-arcsinh-etc} expressed the series
for arcsin and arccos but not arccosh in terms of the series for arcsinh.
The inverse hyperbolic arccosine of a series $U$ can instead be computed
from either of\begin{eqnarray}
\mbox{arccosh}\left(U\right) & \equiv & 2\ln\left(\sqrt{\dfrac{U-1}{2}}+\sqrt{\dfrac{U+1}{2}}\right),\label{eq:acoshViaLn1}\\
\mbox{arccosh} & \equiv & \ln\left(U+\sqrt{U-1}\sqrt{U+1}\right).\label{eq:acoshViaLn2}\end{eqnarray}

The latter is slightly slower asymptotically because of the series
multiplication. The alternative that gives a simpler result might
depend on the general expression simplifier and the case, such as
whether or not the dominant degree of $U$ is negative.

Each square root in these formulas might contribute a factor containing
expressions involving $\arg\left(\ldots\right)$ to all of its result
terms. The subsequent logarithm in identity (\ref{eq:acoshViaLn1})
or (\ref{eq:acoshViaLn2}) might then contribute horrid nested instances
of $\arg\left(\ldots\right)$ to a 0-degree term in the result. Here
are some simplified correction terms and factors that avoid this:

Let the dominant term of $U$ be $cz^{\alpha}$, and let the sum of
the remaining terms be $g(z)$. Let $bz^{\sigma}$ be the dominant
term of $g(z),$ and let $h(z)$ be the remaining terms of $g(z)$.
A correction term and/or factor is unnecessary if $\alpha=0\,\wedge\,\left(\Im\left(c\right)\neq0\,\vee\, c>1\right)$.
Otherwise we have the following cases:

\subsubsection{\label{sub:AcoshDeg0CfLtMinus1}Case $\mathrm{arccosh}\left(U(z)\right)$
with dominant deg 0 \& dominant coefficient < -1}

If $\alpha=0$ and $c<-1$, then\begin{eqnarray}
\mathrm{arccosh}\left(c(z)+g\left(z\right)\right) & = & 2i\pi J+\mathrm{arccosh}\left(c(z)\right)+o\left(z^{0}\right)\end{eqnarray}
where\begin{eqnarray}
J & = & \begin{cases}
-1 & \mathrm{if\:}\Im\left(g(z)\right)<0,\\
0 & \mathrm{otherwise}.\end{cases}\end{eqnarray}

\subsubsection{\label{sub:AcoshDeg0CfInMinus1ZeroOrZero1}Case $\mathrm{arccosh}\left(U(z)\right)$
with dominant degree 0 and dominant coefficient in (-1, 0) or (0,1)}

If $\alpha=0\,\wedge\,\left(-1<c<0\:\vee\:0<c<1\right)$ then \begin{equation}
\mathrm{arccosh}\left(c+g\left(z\right)\right)=K\cdot\left(\mathrm{arccosh}\left(c\right)+o\left(z^{0}\right)\right),\end{equation}
where\begin{equation}
K\,=\,(-1)^{J}\,=\,\begin{cases}
-1 & \mathrm{if\:}\Im\left(g(z)\right)<0,\\
1 & \mathrm{otherwise.}\end{cases}\end{equation}

\subsubsection{\label{sub:AcoshhDeg0Cf1}Case $\mathrm{arccosh}\left(U(z)\right)$
with dominant degree 0 and dominant coefficient 1}

If instead $\alpha=0$ and $c=1$, then\begin{eqnarray}
\mathrm{arccosh}\left(1+bz^{\sigma}+h\left(z\right)\right) & = & BE\cdot\left(\sqrt{2b}z^{\sigma/2}+o\left(z^{\sigma/2}\right)\right),\end{eqnarray}
where\begin{eqnarray}
B & = & (-1)^{\left\lfloor \dfrac{1}{2}\,-\,\dfrac{\arg\left(b+\dfrac{h(z)}{z^{\sigma}}\right)+\sigma\arg\left(z\right)}{2\pi}\right\rfloor },\\
E & = & \begin{cases}
-1 & \mathrm{if\:}\arg\left(b\right)=\pi\:\wedge\:\Im\left(\dfrac{h\left(z\right)}{z^{\sigma}}\right)<0,\\
1 & \mathrm{otherwise}.\end{cases}\end{eqnarray}

Expression $b+\frac{h(z)}{z^{\sigma}}\rightarrow b$ as $z\rightarrow0$,
so the critical angles for $B$ are\begin{equation}
-\pi\,<\,\theta_{c}=\dfrac{\left(2n+1\right)\pi-\arg\left(b\right)}{\sigma}\,\leq\,\pi\end{equation}
for all integer $n$ satisfying\begin{equation}
\left\lfloor \dfrac{\arg\left(b\right)}{2\pi}-\dfrac{1}{2}-\dfrac{\sigma}{2}\right\rfloor <n\leq\left\lfloor \dfrac{\arg\left(b\right)}{2\pi}-\dfrac{1}{2}+\dfrac{\sigma}{2}\right\rfloor .\end{equation}
There are no such angles if $0<\sigma<1$ and $\arg\left(b\right)$
is sufficiently close to 0: If\begin{equation}
\left(\sigma-1\right)\pi\,<\,\arg\left(b\right)\,<\,\left(1-\sigma\right)\pi,\end{equation}
then $B\equiv0$, which is easily tested.

If $0<\sigma\leq2$, then a more candid and accurate representation
of $B$ is\begin{equation}
\begin{cases}
1 & \mathrm{if}\:-\pi<\arg\left(b+\dfrac{h(z)}{z^{\sigma}}\right)+\sigma\arg\left(z\right)\leq\pi,\\
-1 & \mathrm{otherwise}.\end{cases}\end{equation}

\subsubsection{\label{sub:AcoshDeg0CfMinus1}Case $\mathrm{arccosh}\left(U(z)\right)$
with dominant degree 0 and dominant coefficient -1}

If $\alpha=0$ and $c=-1$, then\begin{eqnarray*}
\mathrm{arccosh}\left(-1+bz^{\sigma}+h\left(z\right)\right) & = & i\pi C+CBE\cdot\left(i\sqrt{2b}z^{\sigma/2}+o\left(z^{\sigma/2}\right)\right)\end{eqnarray*}
where\begin{eqnarray}
C & = & \begin{cases}
1 & \mathrm{if\:}\Im\left(bz^{\sigma}+h\left(z\right)\right)<0,\\
-1 & \mathrm{otherwise}.\end{cases}\end{eqnarray}

\subsubsection{\label{sub:AcoshNegDeg}Case $\mathrm{arccosh}\left(U(z)\right)$
with negative dominant degree}

If instead $\alpha<0$ then\begin{eqnarray}
\mathrm{arccosh}\left(cz^{\alpha}+g\left(z\right)\right) & = & 2i\pi\left(D_{1}+M\right)+\ln\left(2c\right)+\alpha\ln\left(z\right)+o\left(z^{0}\right)\end{eqnarray}
where\begin{eqnarray}
D_{1} & = & \left\lfloor \dfrac{1}{2}-\dfrac{\arg\left(c+\dfrac{g\left(z\right)}{z^{\alpha}}\right)+\alpha\arg\left(z\right)}{2\pi}\right\rfloor ,\\
M & = & \begin{cases}
1 & \mathrm{if\:}\arg\left(c\right)=\pi\:\wedge\:\Im\left(\dfrac{g(z)}{z^{\alpha}}\right)<0,\\
0 & \mathrm{otherwise}.\end{cases}\end{eqnarray}

Expression $c+\frac{g\left(z\right)}{z^{\alpha}}\rightarrow c$ as
$z\rightarrow0$, so the critical angles for $D_{1}$ are\begin{equation}
-\pi\,<\,\theta_{c}=\dfrac{\left(2n+1\right)\pi-\arg\left(c\right)}{\alpha}\,\leq\,\pi\end{equation}
for all integer $n$ satisfying\begin{equation}
\left\lfloor \dfrac{\arg\left(c\right)}{2\pi}-\dfrac{1}{2}-\dfrac{\alpha}{2}\right\rfloor <n\leq\left\lfloor \dfrac{\arg\left(c\right)}{2\pi}-\dfrac{1}{2}+\dfrac{\alpha}{2}\right\rfloor .\end{equation}
There are no such angles if $-1<\alpha<0$ and $\arg\left(c\right)$
is sufficiently close to 0: If\begin{equation}
-\left(1+\alpha\right)\pi\,<\,\arg\left(c\right)\,<\,\left(1+\alpha\right)\pi,\end{equation}
then $D_{1}\equiv0$, which is easily tested.

If $-2<\alpha<0$, then a more candid and accurate representation
of $D_{1}$ is\begin{equation}
\begin{cases}
1 & \mathrm{if}\:-\pi<\arg\left(c+\dfrac{g(z)}{z^{\alpha}}\right)+\alpha\arg\left(z\right)\leq\pi,\\
-1 & \mathrm{otherwise}.\end{cases}\end{equation}

\subsubsection{\label{sub:AcoshPositiveDeg}Case $\mathrm{arccosh}\left(U(z)\right)$
with positive dominant degree}

If instead $\alpha>0$, then\begin{eqnarray}
\mathrm{arccosh}\left(cz^{\alpha}+g\left(z\right)\right) & = & (-1)^{D_{1}}\left(-1\right)^{D_{2}}G\cdot\left(\dfrac{i\pi}{2}+o\left(z^{0}\right)\right)\end{eqnarray}
where\begin{eqnarray}
D_{2} & = & \left\lfloor \dfrac{1}{2}-\dfrac{1}{2\pi}\left(\arg\left(\dfrac{-1}{c+\dfrac{g(z)}{z^{\alpha}}}\right)-\alpha\arg(z)\right)\right\rfloor ,\label{eq:DefineD2}\\
G & = & \begin{cases}
-1 & \mathrm{if\:}z\neq0\wedge\Im\left(c\right)<0\:\vee\\
 & \quad\arg\left(c\right)=\pi\wedge\Im\left(\dfrac{g(z)}{z^{\alpha}}\right)<0\:\vee\\
 & \quad\arg\left(c\right)=0\wedge\Im\left(\dfrac{g(z)}{cz^{\alpha}+g(z)}\right)<0,\\
1 & \mathrm{otherwise}.\end{cases}\label{eq:DefineG}\end{eqnarray}

Expression $-1/\left(c+g(z)/z^{\alpha}\right)\rightarrow-1/c$ as
$z\rightarrow0$, so the critical angles for $D_{2}$ are\begin{equation}
-\pi\,<\,\theta_{c}=\dfrac{\left(2n+1\right)\pi-\arg\left(-1/c\right)}{\alpha}\,\leq\,\pi\end{equation}
for all integer $n$ satisfying\begin{equation}
\left\lfloor \dfrac{\arg\left(-1/c\right)}{2\pi}-\dfrac{1}{2}-\dfrac{\alpha}{2}\right\rfloor <n\leq\left\lfloor \dfrac{\arg\left(-1/c\right)}{2\pi}-\dfrac{1}{2}+\dfrac{\alpha}{2}\right\rfloor .\end{equation}
There are no such angles if $0<\alpha<1$ and $\arg\left(c\right)$
is sufficiently close to 0: If\begin{equation}
\left(\alpha-1\right)\pi\,<\,\arg\left(-1/c\right)\,<\,\left(1-\alpha\right)\pi,\end{equation}
then $D_{2}\equiv0$, which is easily tested.

If $0<\alpha\leq2$, then a more candid and accurate representation
of $D_{2}$ is\begin{equation}
\begin{cases}
1 & \mathrm{if}\:-\pi<\arg\left(-1/\left(c+\dfrac{g(z)}{z^{\alpha}}\right)\right)+\alpha\arg\left(z\right)\leq\pi,\\
-1 & \mathrm{otherwise}.\end{cases}\end{equation}

When using these corrections and identity (\ref{eq:acoshViaLn1})
or (\ref{eq:acoshViaLn2}), it is important to use the rewrite $\ln\left(cz^{\alpha}\right)\rightarrow\ln(c)+\alpha\ln(z)$
but suppress in these formulas the logarithmic and fractional power
adjustments discussed in Sections \ref{sec:Branch-bugs-for-ln} and
\ref{sec:BugsForFracPows}. Also, sub-expressions such as\[
\dfrac{-1}{c+\dfrac{g(z)}{z^{\alpha}}}\quad\mathrm{and}\quad\dfrac{g(z)}{cz^{\alpha}+g(z)}\]
in (\ref{eq:DefineD2}) and (\ref{eq:DefineG}) can and should be
approximated by appropriate-order series. Note that a truncated series
for these two sub-expressions usually can't be exact for non-zero
$g(z)$. Therefore we should expect some narrow cusps in which the
corrections are not piecewise constant. If $g(z)$ is known to $o\left(z^{m}\right)$,
then we can compute the series for these two sub-expressions to $o\left(z^{m-\alpha}\right)$.

For both $\mathrm{arcsinh}\left(U(z)\right)$ and $\mathrm{arccosh}\left(U(z)\right)$,
if we are unable to narrow the choice to one of the above cases for
reasons such as $c$ containing insufficiently constrained indeterminates
other than $z$, then we should construct a piecewise result from
all of the cases that we can't preclude.

Table \ref{tab:SeriesACosh} contains some correct results for $\mathrm{arccosh}\left(\ldots\right)$
series.

\begin{table}[h]
\caption{$\mbox{\ensuremath{1^{\mathrm{st}}} two non-0 terms of series}\left(\mathrm{arccosh}\left(u\right),\, z\!=\!0,\, o\left(z^{\infty}\right)\right)$
with $z\in\mathbb{C}$, $x\in\mathbb{R}$: \label{tab:SeriesACosh}}

\begin{tabular}{|c|r|l|l|}
\hline 
\negthinspace{}\negthinspace{}\#\negthinspace{}\negthinspace{} & $u$ & $1^{\mathrm{st}}$ two non-0 terms of $\mathrm{arccosh}\, u$ near
$z=x=0$ & Sec.\tabularnewline
\hline
\hline 
\negthinspace{}\negthinspace{}1\negthinspace{}\negthinspace{} & -$2+z^{2}+z^{3}\!$ & $\left(\ln\left(2+\sqrt{3}\right)+\begin{cases}
i\pi & \mathrm{if}\:\Im\left(z^{2}+z^{3}\right)\geq0\\
-i\pi & \mathrm{otherwise}\end{cases}\right)-\dfrac{z^{2}}{\sqrt{3}}+\cdots$ & \ref{sub:AcoshDeg0CfLtMinus1}\tabularnewline
\hline 
\negthinspace{}\negthinspace{}2\negthinspace{}\negthinspace{} & -$2+iz^{1/4}+z$ & $\left(\ln\left(2+\sqrt{3}\right)+i\pi\right)-iz^{1/4}/\sqrt{3}+\cdots$ & \ref{sub:AcoshDeg0CfLtMinus1}\tabularnewline
\hline 
\negthinspace{}\negthinspace{}3\negthinspace{}\negthinspace{} & $\dfrac{1}{2}+z^{2}+z^{3}$ & $\left(\begin{cases}
1 & \mathrm{if}\:\Im\left(z^{2}+z^{3}\right)\geq0\\
-1 & \mathrm{otherwise}\end{cases}\right)\left(\dfrac{i\pi}{3}-\dfrac{2iz^{2}}{\sqrt{3}}\cdots\right)$ & \ref{sub:AcoshDeg0CfInMinus1ZeroOrZero1}\tabularnewline
\hline 
\negthinspace{}\negthinspace{}4\negthinspace{}\negthinspace{} & $1+z^{2}+z^{3}$ & $\left(\begin{cases}
1 & \mathrm{if}\:-\pi<\arg\left(1+z\right)+2\arg z\leq\pi\\
-1 & \mathrm{otherwise}\end{cases}\right)\left(\sqrt{2}z+\dfrac{z^{2}}{\sqrt{2}}+\cdots\right)$ & \ref{sub:AcoshhDeg0Cf1}\tabularnewline
\hline 
\negthinspace{}\negthinspace{}5\negthinspace{}\negthinspace{} & $-1+z^{2}+z^{3}$ & $\begin{array}{c}
\left(\begin{cases}
1 & \Im\left(z^{2}+z^{3}\right)\geq0\\
-1 & \mathrm{otherwise}\end{cases}\right)\left(i\pi+B\cdot\left(-i\sqrt{2}z+\cdots\right)\right)\qquad\\
\qquad\qquad\qquad|\: B=(-1)^{\left\lfloor \dfrac{1}{2}-\dfrac{\arg\left(1+z\right)+2\arg z}{2\pi}\right\rfloor }\end{array}$ & \ref{sub:AcoshDeg0CfMinus1}\tabularnewline
\hline 
\negthinspace{}6\negthinspace{}\negthinspace{} & $z^{-2}+z^{-1}$ & $\left(2i\pi\left\lfloor \dfrac{\pi-\arg\left(1+z\right)+2\arg\left(z\right)}{2\pi}\right\rfloor +\ln2-2\ln z\right)+z+\cdots$ & \ref{sub:AcoshNegDeg}\tabularnewline
\hline 
\negthinspace{}\negthinspace{}7\negthinspace{}\negthinspace{} & $z+z^{2}$ & $\begin{array}{c}
\left((-1)^{D_{1}+D_{2}}\begin{cases}
1 & \mathrm{if}\:\Im\left(z+\cdots\right)\geq0\\
-1 & \mathrm{otherwise}\end{cases}\right)\left(\dfrac{i\pi}{2}-iz+\cdots\right)\\
\mbox{\mbox{}}\\
\mathrm{|}\; D_{1}=\left\lfloor \dfrac{1}{2}-\dfrac{\arg\left(1+z\right)+\arg z}{2\pi}\right\rfloor \qquad\qquad\\
\:\:\wedge\: D_{2}=\left\lfloor \dfrac{1}{2}-\dfrac{\arg\left(-1+z+z^{2}+\cdots\right)-\arg z}{2\pi}\right\rfloor \end{array}$ & \ref{sub:AcoshPositiveDeg}\tabularnewline
\hline 
\negthinspace{}\negthinspace{}8\negthinspace{}\negthinspace{} & $-2-x^{3/4}$ & $\left(\ln\left(2+\sqrt{3}\right)+i\pi\begin{cases}
-1 & \mathrm{if\:}x<0\\
1 & \mathrm{otherwise}\end{cases}\right)+\dfrac{x^{3/4}}{\sqrt{3}}+\cdots$ & \ref{sub:AcoshDeg0CfLtMinus1}\tabularnewline
\hline 
\negthinspace{}\negthinspace{}9\negthinspace{}\negthinspace{} & $-2+x$ & $\left(\ln\left(2+\sqrt{3}\right)+i\pi\right)-ix/\sqrt{3}+\cdots$ & \ref{sub:AcoshDeg0CfLtMinus1}\tabularnewline
\hline 
\negthinspace{}\negthinspace{}10\negthinspace{}\negthinspace{} & $\dfrac{1}{2}-\sqrt{x}$ & $\left(\begin{cases}
1 & \mathrm{if}\: x\geq0\\
-1 & \mathrm{otherwise}\end{cases}\right)\left(\dfrac{\pi i}{3}+\dfrac{2ix^{1/2}}{\sqrt{3}}+\cdots\right)$ & \ref{sub:AcoshDeg0CfInMinus1ZeroOrZero1}\tabularnewline
\hline 
\negthinspace{}\negthinspace{}11\negthinspace{}\negthinspace{} & $\nicefrac{1}{2}+x$ & $\pi i/3-2ix/\sqrt{3}$ & \ref{sub:AcoshDeg0CfInMinus1ZeroOrZero1}\tabularnewline
\hline 
\negthinspace{}\negthinspace{}12\negthinspace{}\negthinspace{} & $1-x^{2}-x^{3}$ & $\left(\begin{cases}
1 & \mathrm{if}\: x\geq0\\
-1 & \mathrm{otherwise}\end{cases}\right)\left(\sqrt{2}ix+\dfrac{i}{\sqrt{2}}x^{2}+\cdots\right)$ & \ref{sub:AcoshhDeg0Cf1}\tabularnewline
\hline 
\negthinspace{}\negthinspace{}13\negthinspace{}\negthinspace{} & $1-x^{2}-x^{3/2}$ & $\sqrt{2}ix+ix^{3/2}/\sqrt{2}+\cdots$ & \ref{sub:AcoshhDeg0Cf1}\tabularnewline
\hline 
\negthinspace{}\negthinspace{}14\negthinspace{}\negthinspace{} & $-1+x^{2}$ & $\pi i+\left(\begin{cases}
1 & \mathrm{if}\: x\geq0\\
-1 & \mathrm{otherwise}\end{cases}\right)\left(-\sqrt{2}ix+\cdots\right)$ & \ref{sub:AcoshDeg0CfMinus1}\tabularnewline
\hline 
\negthinspace{}\negthinspace{}15\negthinspace{}\negthinspace{} & $-x^{-2}-x^{-1}$ & $\left(-2\ln x+\ln2+\begin{cases}
\pi i & \mathrm{if}\: x\geq0\\
3\pi i & \mathrm{otherwise}\end{cases}\right)-\dfrac{3ix}{2}+\cdots$ & \ref{sub:AcoshNegDeg}\tabularnewline
\hline 
\negthinspace{}\negthinspace{}16\negthinspace{}\negthinspace{} & $-x^{-2}+x^{-1/2}$ & $\left(-2\ln x+\ln2+\pi i\right)-x^{3/2}+\cdots$ & \ref{sub:AcoshNegDeg}\tabularnewline
\hline 
\negthinspace{}\negthinspace{}17\negthinspace{}\negthinspace{} & $2x-x^{3/2}$ & $\pi i/2-2ix+\cdots$ & \ref{sub:AcoshPositiveDeg}\tabularnewline
\hline
\end{tabular}
\end{table}

\section*{Summary}

The generalization from Taylor series to generalized Puiseux series
introduces a surprising number of difficulties that haven't been fully
addressed in previous literature and implementations. The most serious
of these is incorrect results for expansion points that are on a branch
cut or a branch point. Formulas are presented here that correct this
for logarithms, fractional powers, inverse trigonometric functions,
and inverse hyperbolic functions. These corrections typically entail
an additive piecewise-constant multiple of $2\pi i$ and/or a unit-magnitude
piecewise-constant multiplicative factor. Some of these corrections
are applicable even to Taylor series. There are many alternative formulas
for these corrections. The alternatives presented here are chosen:\vspace{-0.1in}

\begin{itemize}
\item to reduce catastrophic cancellation when evaluated with approximate
arithmetic,\vspace{-0.1in}

\item to candidly reveal the piecewise constancy,\vspace{-0.1in}

\item to reveal the boundaries of the pieces as explicitly as is practical,\vspace{-0.1in}

\item to simplify significantly where practical, such as for real expansion
variables or numeric coefficients.\vspace{-0.1in}

\item to be as concise as possible subject to the above goals.
\end{itemize}
Tests, formulas and algorithms are given that compute simplified versions
of these correction terms and factors near the expansion point for
both real and complex expansion variables.

\section*{Acknowledgments}

I thank David Diminnie for his assistance.

\end{document}